\documentclass[a4paper,UKenglish,cleveref, autoref, thm-restate]{lipics-v2021}

\hideLIPIcs

\usepackage{bussproofs}
\usepackage{xspace}
\usepackage{bm}

\newcommand{\Red}{\Math{\boldsymbol{\forall}\kern .04ex\ProofSystemsFont{red}}}

\newcommand{\red}{\Math{\,\raisebox{.2ex}{$\scriptstyle+$}\,\Red}}

\newlength{\breite}
\newcommand{\Math}[1]{\ensuremath{#1}}
\newcommand{\ComplexityClassFont}[1]{\mathsf{#1}}

\newcommand{\ProofSystemsFont}[1]{\mathsf{#1}}
\newcommand{\DefineProofSystem}[2]{\expandafter\def\csname#1\endcsname{\Math{\ComplexityClassFont{#2}}\xspace}}
\newcommand{\DefineProofSystemNoSpace}[2]
{\expandafter\def\csname#1\endcsname{\Math{\ComplexityClassFont{#2}}}}
\newcommand{\Frege}[1][]{\Math{\ifthenelse{\isempty{#1}}
{\ProofSystemsFont{Frege}}{#1\text{-}\ProofSystemsFont{Frege}}}}
\DefineProofSystem{eFrege}{e\Frege[]\xspace}
\newcommand{\eFregeRed}[1][]{\eFrege\!\!\red}
\DefineProofSystemNoSpace{DRAT}{\ProofSystemsFont{DRAT}}
\DefineProofSystem{MCPS}{\ProofSystemsFont{CLIP}}
\DefineProofSystemNoSpace{clip}{\ProofSystemsFont{CLIP}}
\DefineProofSystem{micep}{\ProofSystemsFont{MICE'}}
\DefineProofSystem{cpog}{\ProofSystemsFont{CPOG}}
\DefineProofSystem{kcps}{\ProofSystemsFont{KCPS(\#SAT)}}
\DefineProofSystem{mice}{\ProofSystemsFont{MICE}}
\DefineProofSystem{qmice}{\ProofSystemsFont{Q\text{-}MICE}}
\DefineProofSystem{emice}{\Math{\forall}\ProofSystemsFont{Exp\text{+}MICE}}
\DefineProofSystem{exp}{\Math{\forall}\ProofSystemsFont{Exp\text{+}}\mathcal{P}}

\newcommand{\mcount}{\ell}

\newcommand{\vars}{\mathrm{\textit{vars}}}

\usepackage{tikz}
\usetikzlibrary{shapes,automata, matrix, arrows.meta, positioning}
\usetikzlibrary{decorations.pathreplacing}

\usetikzlibrary{fadings}
\tikzstyle{uedge}=[draw=blue!50!red]
\tikzstyle{fedge}=[draw=blue]
\tikzstyle{iedge}=[draw=red]
\tikzstyle{redge}=[draw=green!50!black]
\tikzstyle{rnode}=[draw,inner sep=2pt,color=black]
\tikzstyle{tnode}=[circle,minimum width=3pt,fill,inner sep=0pt]
\tikzstyle{dotnode}=[circle,minimum width=2pt,fill,inner sep=0pt]
\tikzstyle{labn}=[font=\sffamily,circle,fill=white,inner sep=1pt,draw=black]
\tikzstyle{legn}=[font=\scriptsize]
\tikzstyle{reln}=[circle,fill=white,inner sep=.4pt,draw=black]
\tikzstyle{oreln}=[circle,fill=white,inner sep=.4pt,draw=black!50,solid]
\tikzstyle{exist}=[thick, circle,fill=white,inner sep=3pt,draw=black,solid]
\tikzstyle{univ}=[thick, rectangle,fill=white,inner sep=5pt,draw=black,solid]
\tikzstyle{oree}=[thick,draw=black!50,densely dashed]
\tikzstyle{ree}=[thick,draw=black]
\tikzstyle{calcn}=[rectangle%
,rounded corners=1mm%
,draw=black,%
minimum height=.5cm,minimum width=.5cm,inner sep=5pt%
]
\tikzset{cross/.style={cross out, draw=black, fill=none, minimum size=.4cm, inner sep=0pt, outer sep=0pt}, cross/.default={1pt}}

\newtheoremstyle{TheoremNum}
    {\topsep}{\topsep}              
    {\itshape}                      
    {}                              
    {\bfseries}                     
    {}                             
    { }                             
    {\thmname{#1}\thmnote{ \textcolor{darkblue}{\bfseries #3}}}
\theoremstyle{TheoremNum}

\newtheoremstyle{TheoremNum2}
    {\topsep}{\topsep}              
    {\normalfont}                      
    {}                              
    {\itshape}                     
    {.}                             
    { }                             
    {\thmname{#1}\thmnote{ \textcolor{darkblue}{\bfseries #3}}}
\theoremstyle{TheoremNum2}

\title{On Proof Systems for \#QBF} 

\author{Sravanthi Chede}{The Institute of Mathematical Sciences (A CI of Homi Bhabha National Institute), Chennai, India \and \url{https://sravanthi-ch.github.io/webpage/} }{sravanthic@imsc.res.in}{https://orcid.org/0000-0001-7170-6156}{Supported by the ANRF J C Bose fellowship no. JCB/2023/000006.}
\author{Leroy Chew}{{Czech Technical University in Prague,  Czech Republic} \and \url{https://leroychew.wordpress.com/}}{leroy_chew@hotmail.co.uk}{https://orcid.org/0000-0003-0226-2832}{This project is supported by the European Union under the project ROBOPROX (reg. no. CZ.02.01.01/00/22\_008/0004590)
and by the Czech Science Foundation project 24-12759S.}
\author{Vaibhav Krishan}{The Institute of Mathematical Sciences (A CI of Homi Bhabha National Institute), Chennai, India\and \url{https://vaibhkrishan.github.io/} }{vaibhavk@imsc.res.in}{https://orcid.org/0009-0000-0335-1963}{}
\author{Anil Shukla}{Indian Institute of Technology Ropar, Rupnagar, India \and \url{https://anilshukl.github.io/website/} }{anilshukla@iitrpr.ac.in}{https://orcid.org/0009-0009-9051-4374}{}

\authorrunning{S. Chede, L. Chew, V. Krishan and A. Shukla} 

\Copyright{Sravanthi Chede, Leroy Chew, Vaibhav Krishan and Anil Shukla} 
\ccsdesc[500]{Theory of computation~Proof complexity}

\keywords{QBF, Model Counting, Proof Systems, \#QBF} 

\category{} 

\relatedversion{} 

\acknowledgements{We thank the anonymous SAT 2026 reviewers for their detailed and helpful suggestions.}

\nolinenumbers 

\EventEditors{Alexey Ignatiev and Stefan Szeider}
\EventNoEds{2}
\EventLongTitle{29th International Conference on Theory and Applications of Satisfiability Testing (SAT 2026)}
\EventShortTitle{SAT 2026}
\EventAcronym{SAT}
\EventYear{2026}
\EventDate{July 20--23, 2026}
\EventLocation{Lisbon, Portugal}
\EventLogo{}
\SeriesVolume{377}
\ArticleNo{10}

\begin{document}
\maketitle

\begin{abstract}

For a quantified Boolean formula (QBF), the problem of computing the number of winning strategies is known as the \#QBF problem. This problem is considered harder than the analogous \#SAT problem~\cite{Ladner89}. Recently, important proof systems for QBFs~\cite{fregeQBF, 9789811245220_0015} and \#SAT~\cite{BeyersdorffHS24,ChedeCS24} have been studied. By extending the ideas from both fields, we show that it is possible to design proof systems for \#QBF. 
Such proof systems are important not only for advancing the theory of \#QBF but also for certifying and designing better \#QBF solvers, an area that is still in its early stages~\cite{DBLP:conf/mkm/ShuklaMKS22,DBLP:journals/constraints/PlankMS24}.

In this paper, we explore \#QBF proof systems to count the number of Skolem functions. In addition to a naive system, we study \#QBF systems based on the $\forall$-expansion rule of QBFs~\cite{JanotaM15}. We observe that these systems have inherent structural weaknesses that lead to lower bounds. 
As an alternative, we propose a \#QBF proof system that we call \qmice, which consists of sound inference rules for computing and certifying the \#QBF solution, similar to the line-based \#SAT proof system \mice~\cite{fichte2022proofs,BeyersdorffHS24}. 
To demonstrate the strength of \qmice, we present various upper bounds, such as the quantified version of the propositional XOR-PAIRS formula, which is known to be hard for \mice~\cite{BeyersdorffHS24}.
Consequently, we also separate \qmice from $\forall$-expansion based \#QBF proof systems.
\end{abstract}

\section{Introduction}
\#QBF, the problem of counting the number of winning strategies of a  QBF, is challenging from several perspectives.
Firstly, from a computational complexity point of view, \#QBF is known to be \#PSPACE-complete~\cite{Ladner89}. 
\#PSPACE is at least as large as  \#P,  but potentially much larger.
Secondly, the number of winning strategies may be doubly exponential, beyond the mere exponential bounds of propositional model counting (\#SAT).
Several proof systems for the \#SAT problem have been studied. 
Notable systems include \kcps~\cite{Capelli19} and CPOG \cite{bryant2023certified} that use \emph{knowledge-compilation}.
These systems prove a model count by a transformation to a knowledge compilation class where model counting is easy.
Recently, the CLIP system~\cite{ChedeCS24} worked using a connection between circuits and PSPACE.
Another approach is to derive the model count via line-based systems, such as \mice (Model-counting Induction by Claim Extension~\cite{fichte2022proofs}).
Starting at axioms, it computes the model count using inference rules. 

In this paper, by extending ideas from QBF and \#SAT proof systems, we provide the first proof systems for the general \#QBF problem, and study its proof complexity. Such proof systems are important to advance the theory of \#QBF. 
The existence of proof systems can be useful towards \#QBF solving, in both the design and correctness of the solvers.

The main contribution of the paper is extending the \mice proof system for \#QBF (Section~\ref{Sub_sec:Q-MICE}), denoted as \qmice. \qmice is a line based \#QBF proof system, in which each line is of the form $(Q.F, A, c)$, where $Q.F$ is a true QBF, $A$ is a partial strategy that consists of Skolem functions for some existential variables, and $c$ is the number of ways to extend $A$ into complete winning strategies for the existential player of $Q.F$. \qmice consists of the axiom rule (Definition~\ref{axiom_def}), the composition rule (Definition~\ref{composition_def}) and the join rule (Definition~\ref{join_def}). Each rule can be applied if its corresponding conditions are satisfied, which are required for the rule to be sound.

In addition to proving that \qmice is  sound and complete (Theorem~\ref{thm:sound_complete_qmice}), we provide meaningful families with short proofs (Section~\ref{sec:upper_bound}) and establish an exponential separation between \qmice and the \#QBF proof system \emice (Theorem~\ref{seperation_theorem}). The \emice proof system (Section~\ref{sec:forexp_mice}) semantically expands the $\forall$ variables of the QBF (as in~\cite{JanotaM15}), then finishes the proof with a \#SAT proof system, in this case \mice.

\textbf{Related work:} 
Recently, some \#QBF solvers~\cite{DBLP:journals/constraints/PlankMS24, DBLP:conf/mkm/ShuklaMKS22,CapelliLPS24,ShawJM24} have been introduced for  solving the \#QBF problem. For example, the d4-QBF~\cite{CapelliLPS24} solver has been designed for \#QBF which uses few techniques from SAT-solving and works on the full assignment tree of the QBF recursively along with a decomposition step (\cite[Section 3.3]{CapelliLPS24}). The decomposition step identifies two or more non-connected components in the QBF matrix and separately computes the model-count in the components and merges them at the end. Another state-of-the-art \#QBF solver for the restricted case, that is, for QBFs with two quantifier alternations, is the qCounter solver~\cite{DBLP:journals/constraints/PlankMS24}. qCounter tries to enumerate winning strategies systematically one by one and then tries to certify that no more winning strategies exists.

\section{Preliminaries}\label{prelims_sec}
For a CNF $\phi$ and its variables ($\vars(\phi)$), a partial assignment $\alpha$ is a mapping from some variables $(\vars(\alpha)$ $\subseteq\vars(\phi))$ to $\{0,1\}$. 
$\phi|_\alpha$ (read as restricting $\phi$ with $\alpha$) denotes substituting values of $\vars(\alpha)$ from $\alpha$ into $\phi$. 
If $\phi|_\alpha$ is true, $\alpha$ satisfies $\phi$. 

Quantified Boolean Formulas (QBFs) are an extension of propositional Boolean formulas where each variable is quantified with one of $\{ \exists \text{ (existential)},\forall \text{ (universal)} \}$. 
In this paper, QBFs (represented as $Q.\phi$) are of the form $Q_1 X_1...Q_p X_p ~.~ \phi(X_1 \cup \dots \cup X_p)$, where $X_i$s are disjoint sets of variables; $Q_i$ $\in$ \{$ \exists$, $ \forall$\} and $Q_i\not=Q_{i+1}$, $Q$ is the quantifier prefix with $p$ alternations and the matrix $\phi$ is a CNF. 
The outermost (resp. innermost) quantified variables are $X_1$ (resp. $X_p$).
If $x\in X_i$, for any $y \in X_j$ where $j<i$, we say $y$ occurs to the left of $x$ in $Q$ (i.e. $y {\leq}_Q x$). 
$L_Q (x)$ is the set of $\forall$-variables to the left of an $\exists$-variable $x$. 

We can understand the semantics of QBFs as a game between a $\forall$ and an $\exists$ player, i.e., in the $i^{th}$ step the player corresponding to $Q_i$ assigns a Boolean value to each variable in $X_i$. At the end, the $\exists$ (resp.~$\forall$) player wins if substituting this complete assignment of variables in $\phi$ evaluates to $1$ (resp.~0).
An $\exists$ \textbf{strategy} is a set $\mathcal{S}$ containing a Boolean function $S_x$ for each $x\in \vars_\exists(Q)$, called Skolem functions.
The Skolem function for \(x\) depends only on the variables in $L_Q(x)$, i.e. \(S_x\) only takes $|L_Q(x)|$-many arguments, corresponding to the values of the universal variables in $L_Q(x)$, that would be played before $x$.

Formally, for a true QBF $Q.\phi$, existential strategy $\mathcal{S}$ is $\{S_x(L_Q(x))~|~x\in \vars_\exists(Q)\}$.
The \textbf{winning strategy} for $\exists$ is a strategy that for every possible assignment of $\vars_\forall(Q)$, the Skolem functions of the strategy respond on $\vars_\exists(Q)$ so that the QBF matrix is satisfied under the joint assignment to $\exists$ and $\forall$ variables.
A QBF is true  iff there exists a winning strategy for the $\exists$-player~\cite{AroraBarak09}. TQBF is the set of all true QBFs.
There is a dual notion of falsity (FQBF) for winning strategies (sets of Herbrand functions) for the $\forall$-player, defined in a similar manner.
Given a QBF, the problem of \#QBF is to compute the number of winning strategies for its $\exists$-player. 
We can also extend the concept of restriction to functions which respect some total order of variables.
E.g., if $A:=\{z=f(x,y), y=g(x)\}$, and the ordering of variables is $x\leq y\leq z$, $\phi|_A$ denotes $z$ substituted with $f(x,y)$, then $y$ with $g(x)$.

A QBF $Q.\phi$ can be represented as a full binary \textbf{assignment tree} where nodes in every level are labelled with one variable in the order of $Q$ (i.e., the root is the outermost variable). Two outgoing edges from every node are labelled with $0$ or $1$.
The leaves are labelled with $0/1$, which is the value that $\phi$ evaluates to when substituted with the complete assignment according to the edges on the root-to-leaf path.
The satisfiability of the QBF is then computed in a bottom-up fashion (in the assignment tree) with the syntactic meaning of the variables. 
Here, we define subtree of the assignment tree as containing the root and at least one leaf.
A \textbf{model-tree} (winning strategy) of a true QBF in this representation is a subtree that satisfies the following conditions:
\begin{itemize}
    \item it contains one outgoing edge for nodes with an $\exists$-variable and two outgoing edges for nodes with an $\forall$-variable,
    \item and all its leaves are labeled $1$s. 
\end{itemize}

Given a true QBF, we can count the number of model-trees combinatorially with a bottom-up procedure as follows.
First, consider the \(0/1\) values at the leaves as natural numbers, and proceed to their parents.
At a node with an $\exists$-variable, add the values of the children as we you only need one outgoing edge in a model-tree.
Otherwise, if the node has a $\forall$-variable, multiply the values of the children as we need both outgoing edges in a model-tree.
Finally at the root, the value equals the \#QBF answer. 
We illustrate this calculation in the following example, which we use as a running example throughout. 
\begin{example}\label{running_eg}
Let $\Phi:=\forall u_1 \exists e_1 \forall u_2 \exists e_2~.~ (e_1\lor \overline{u}_2)\land (\overline{u}_1\lor u_2\lor \overline{e}_2)\land (\overline{u}_1\lor \overline{u}_2\lor e_2)$. The complete assignment tree of $\Phi$ is shown in Figure~\ref{running_tree} for convenience.
The node labels show how the calculation proceeds as per the bottom-up procedure for calculating the \#QBF answer.
\end{example}
\begin{figure}[!h]
    \centering
\begin{tikzpicture}[scale=0.6, transform shape]

\node[univ, draw, label=above:{$4\times 1=4$}] (a) at (0,5.5) {$\bm u_1$};

\node[exist,draw, label=above:{$0+4=4$}] (b) at (-6,4.25) {$\bm e_1$};
\node[exist,draw, label=above:{$0+1=1$}] (c) at (6,4.25) {$\bm e_1$};

\node[univ, draw, label=above:{$2\times 0=0$}] (d) at (-9,2.95) {$\bm u_2$};
\node[univ, draw, label=above:{$2\times 2=4$}] (e) at (-3,2.95) {$\bm u_2$};
\node[univ, draw, label=above:{$1\times 0=0$}] (f) at (3,2.95) {$\bm u_2$};
\node[univ, draw, label=above:{$1\times 1=1$}] (g) at (9,2.95) {$\bm u_2$};

\node[exist,draw, label=93:{$1+1=2$}] (h) at (-10,1.5) {$\bm e_2$};
\node[exist,draw, label=87:{$0+0=0$}] (i) at (-8,1.5) {$\bm e_2$};
\node[exist,draw, label=95:{$1+1=2$}] (j) at (-4,1.5) {$\bm e_2$};
\node[exist,draw, label=85:{$1+1=2$}] (k) at (-2,1.5) {$\bm e_2$};
\node[exist,draw, label=95:{$1+0=1$}] (l) at (2,1.5) {$\bm e_2$};
\node[exist,draw, label=85:{$0+0=0$}] (m) at (4,1.5) {$\bm e_2$};
\node[exist,draw, label=95:{$1+0=1$}] (n) at (8,1.5) {$\bm e_2$};
\node[exist,draw, label=85:{$0+1=1$}] (o) at (10,1.5) {$\bm e_2$};

\node[] (p) at (-10.5,0) {$\bm 1$};
\node[] (q) at (-9.5,0) {$\bm 1$};
\node[] (r) at (-8.5,0) {$\bm 0$};
\node[] (s) at (-7.5,0) {$\bm 0$};
\node[] (t) at (-4.5,0) {$\bm 1$};
\node[] (u) at (-3.5,0) {$\bm 1$};
\node[] (v) at (-2.5,0) {$\bm 1$};
\node[] (w) at (-1.5,0) {$\bm 1$};
\node[] (x) at (1.5,0) {$\bm 1$};
\node[] (y) at (2.5,0) {$\bm 0$};
\node[] (z) at (3.5,0) {$\bm 0$};
\node[] (aa) at (4.5,0) {$\bm 0$};
\node[] (bb) at (7.5,0) {$\bm 1$};
\node[] (cc) at (8.5,0) {$\bm 0$};
\node[] (dd) at (9.5,0) {$\bm 0$};
\node[] (ee) at (10.5,0) {$\bm 1$};

\draw[black, ->] (a) -- (c); 

\draw[black, dashed, ->] (a) -- (b); 

\draw[black, dashed , ->] (b) -- (d); 
\draw[black, ->] (b) -- (e);
\draw[black, dashed, ->] (c) -- (f); 
\draw[black, ->] (c) -- (g);

\draw[black, dashed, dash pattern=on 2.75pt off 1.9pt, ->] (d) -- (h); 
\draw[black, ->] (d) -- (i);
\draw[black, dashed, dash pattern=on 2.75pt off 1.9pt, ->] (e) -- (j); 
\draw[black, ->] (e) -- (k);
\draw[black, dashed, dash pattern=on 2.75pt off 1.9pt, ->] (f) -- (l); 
\draw[black, ->] (f) -- (m);
\draw[black, dashed, dash pattern=on 2.75pt off 1.9pt, ->] (g) -- (n); 
\draw[black, ->] (g) -- (o);

\draw[black, dashed, dash pattern=on 2.6pt off 1.8pt, ->] (h) -- (p); 
\draw[black, ->] (h) -- (q);
\draw[black, dashed, dash pattern=on 2.6pt off 1.8pt, ->] (i) -- (r); 
\draw[black, ->] (i) -- (s);
\draw[black, dashed, dash pattern=on 2.6pt off 1.8pt, ->] (j) -- (t); 
\draw[black, ->] (j) -- (u);
\draw[black, dashed, dash pattern=on 2.6pt off 1.8pt, ->] (k) -- (v); 
\draw[black, ->] (k) -- (w);
\draw[black, dashed, dash pattern=on 2.6pt off 1.8pt, ->] (l) -- (x); 
\draw[black, ->] (l) -- (y);
\draw[black, dashed, dash pattern=on 2.6pt off 1.8pt, ->] (m) -- (z); 
\draw[black, ->] (m) -- (aa);
\draw[black, dashed, dash pattern=on 2.6pt off 1.8pt, ->] (n) -- (bb); 
\draw[black, ->] (n) -- (cc);
\draw[black, dashed, dash pattern=on 2.6pt off 1.8pt, ->] (o) -- (dd); 
\draw[black, ->] (o) -- (ee);

\end{tikzpicture}
    \caption{Assignment tree of the QBF $\Phi$ from Example~\ref{running_eg} (solid edge is $1$ and dashed edge is $0$)}
    \label{running_tree}
\end{figure}

The \#QBF($\Phi$) from Figure~\ref{running_tree} is $4$, as denoted by the value at the root. Also, for instance, an explicit winning strategy from the same Figure~\ref{running_tree} is, $\mathcal{S}=\{S_{e_1}=1, S_{e_2}=u_2\}$.

\noindent \textit{Verifying that two strategies are different:}
Given two strategies $\mathcal{S}_1, \mathcal{S}_2$ of $\exists$-player for a QBF $Q.\phi$, they are different if for some complete assignment $\alpha$ to the $\forall$ variables ({witness assignment}), $\mathcal{S}_1$, $\mathcal{S}_2$ output different values for at least one of the Skolem functions corresponding to some $\exists$-variable $x$. 
That is, given two winning strategies $\mathcal{S}_1=\{S_{1,x}(L_Q(x))~|~x\in\vars_\exists(Q)\}$ and $\mathcal{S}_2=\{S_{2,x}(L_Q(x))~|~x\in\vars_\exists(Q)\}$, they are different if there exists an $x\in\vars_\exists(Q)$ and a witness assignment $\alpha$ to $\vars_\forall(Q)$ such that $S_{1,x}(\alpha)\not= S_{2,x}(\alpha)$ (here, $S_{1,x}(\alpha)$ ignores the values of the $\forall$-variables of $\alpha$ which are not in $L_Q(x)$. 
We opt for complete assignments for witnessing, as in general, the variable $x$ is not known in advance).
Given a witness assignment, it is easy to verify that the two strategies are different.

\noindent \textit{Verifying correctness of a winning strategy:}
In the QBF matrix, replacing all occurrences of $\exists$-variables with the Skolem functions of the given strategy leads to a propositional statement with only $\forall$-variables in it. A  proof that this is a tautology certifies it is a winning strategy.

We use the Cook-Reckhow~\cite{cook_1977} notion of a proof system as a sound and complete polynomial time checking function. For proof systems $g_1,g_2$ for a language $L$, $g_1$ is exponentially separated from $g_2$ is there is a family of $x\in L$ for which there are small sized $g_1$-proofs but require exponential size $g_2$-proofs.
For FQBFs (set of false QBFs),
we can provide refutations by combining propositional inference with a rule for handling universal variables. `$\forall$red' allows us to perform a $0$ or $1$ substitution on a universal variable $u$, appearing in a conjunct (i.e. any derived line). The condition is that no other variable quantified to the right of $u$ is in that particular conjunct. 

Q-Res~\cite{KBKF95} is a FQBF proof system which allows the $\forall$red rule and the propositional Resolution rule~\cite{Rob63} when the pivot is an existential variable, to construct refutations which eventually derive a $\bot$ from the input QBF.
In the FQBF-refutations we provide in this paper we will use {Frege+$\forall$red~\cite{fregeQBF} }system. 
Frege systems consist of a complete set of axioms schemas that define the Boolean connectives ($\bot, \top, \neg, \rightarrow, \leftrightarrow, \wedge, \vee$), and a Modus ponens rule. 
Frege is chosen here because it is a powerful proof system. We will omit individual propositional inference steps.

\section{Proof Systems for \#QBF}\label{naive_sec}
Naively, we can define a rudimentary proof system for the \#QBF problem.
The system enumerates all $k$ winning strategies for the given true QBF $Q.\phi$.
Then, to argue the correctness of the count, the system also proves:

\begin{itemize}
\item the correctness of each winning strategy, which requires a tautological proof of the propositional statement obtained after replacing $\exists$-variables with their respective Skolem functions in $Q.\phi$,
\item that the strategies are all distinct from each other, with explicit witness assignments for distinguishing each pair of winning strategies,
\item that no other winning strategy exists for $Q.\phi$, which we encode succinctly (in the size of $k$ and $Q.\phi$) as a false QBF $\Psi$ as follows.
\end{itemize}

\begin{claim}\label{absence_of_models_1}
Given a true QBF $Q.\phi$ and a set of winning strategies $\mathcal{S}_1,\ldots, \mathcal{S}_k$, the following QBF is false if and only if there are no other winning strategies for $Q.\phi$:
$$\Psi:=\exists \vec{\alpha}_1,...,\vec{\alpha}_k~Q~.~ \phi \land \underset{i\in [k]}{\bigwedge} \bigg(\underset{u\in U}{\bigwedge}(u\leftrightarrow {\alpha}_{i,u}) \rightarrow \underset{y\in E}{\bigvee} \Big(y\not\leftrightarrow S_{i,y}(\vec{\alpha}_i)\Big)\bigg)$$
where, $U=\vars_\forall(Q)$ and
$E=\vars_\exists(Q)$. Here, we use  $\vec{\alpha}_i$ as a witness assignment of the fact that any potential new winning strategy must differ from $\mathcal{S}_i$. $\alpha_{i,u}$ denotes the Boolean assignment corresponding to universal variable $u$ in $\vec{\alpha}_i$, that is $\alpha_i=\{\vec{\alpha}_{i,u}~|~u\in U\}$.
$S_{i,y}$ is the Skolem function of the existential variable $y$ in the winning strategy $\mathcal{S}_i$.

\end{claim}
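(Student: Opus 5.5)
We prove both directions by contraposition, reading $\Psi$ through its semantics: $\Psi$ is true iff there exist values for the outer existential block $\vec{\alpha}_1,\dots,\vec{\alpha}_k$ such that the remaining QBF $Q.\,(\phi\wedge C_{\vec\alpha})$ is true. Here
$$C_{\vec\alpha}:=\bigwedge_{i\in[k]}\Big(\bigwedge_{u\in U}(u\leftrightarrow\alpha_{i,u})\rightarrow\bigvee_{y\in E}(y\not\leftrightarrow S_{i,y}(\vec\alpha_i))\Big).$$
Under a fixed $\vec\alpha$, $C_{\vec\alpha}$ is a constraint only on the play where the universal variables take exactly the values $\vec\alpha_i$. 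On that play it demands that some existential $y$ deviates from $S_{i,y}$. We assume the $\alpha_{i,u}$ are fresh variables and that the Skolem functions $S_{i,y}$ are written as formulas; this yields a QBF, with a CNF conversion via Tseitin variables added to the innermost block if needed. It suffices to show: there is a winning strategy for $Q.\phi$ outside $\{\mathcal S_1,\dots,\mathcal S_k\}$ iff $\Psi$ is true.

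($\Leftarrow$, $\Psi$ true gives a new strategy.) Fix $\vec\alpha$ witnessing truth, and let $\mathcal T$ be a winning $\exists$-strategy for $Q.(\phi\wedge C_{\vec\alpha})$. Its Skolem functions depend only on $L_Q(y)$, because $\vec\alpha$ is fixed and so appears only as constants. Since $\mathcal T$ satisfies $\phi$ on every universal assignment, it is a winning strategy of $Q.\phi$. For each $i$, consider the universal assignment $\vec\alpha_i$. On it the antecedent of the $i$-th conjunct of $C_{\vec\alpha}$ holds, so some $y$ has $T_y(\vec\alpha_i)\neq S_{i,y}(\vec\alpha_i)$. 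Here we must check that the value $y$ takes in this play is exactly $T_y$ evaluated on the restriction of $\vec\alpha_i$ to $L_Q(y)$, which is how the strategy semantics works. Hence $\mathcal T\neq\mathcal S_i$ for all $i$, with $\vec\alpha_i$ as the witness assignment in the paper's sense.

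($\Rightarrow$, a new strategy makes $\Psi$ true.) Let $\mathcal T$ be a winning strategy distinct from every $\mathcal S_i$. By the definition of distinctness, for each $i$ there is a complete universal assignment $\beta_i$ and a $y$ with $T_y(\beta_i)\ne S_{i,y}(\beta_i)$. Set $\vec\alpha_i:=\beta_i$ and play $\mathcal T$ in the inner game. Then $\phi$ holds on every play. On any play whose universal part differs from $\vec\alpha_i$, the $i$-th conjunct holds vacuously. On the play equal to $\vec\alpha_i$, the consequent holds by the choice of $\beta_i$. So $\Psi$ is true.

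The only delicate point, and the main one to write carefully, is the bookkeeping between ``strategy'' and ``play''. The inner QBF's existential variables are functions only of earlier universals (the $\vec\alpha$ are constants), so $\mathcal T$ is a legal strategy in both directions. Also, distinctness as defined in the paper, namely differing Skolem output on some complete universal assignment, coincides exactly with differing as functions, since each $S_y$ ignores the universals outside $L_Q(y)$. Everything else is direct unfolding of the semantics.
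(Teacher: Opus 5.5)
Your proposal is correct and follows the same route as the paper: under a fixed choice of $\vec{\alpha}_1,\ldots,\vec{\alpha}_k$, a winning strategy for the inner part of $\Psi$ is a winning strategy of $Q.\phi$ that differs from each $\mathcal{S}_i$ at the witness $\vec{\alpha}_i$, and conversely a new winning strategy together with its distinguishing witnesses makes $\Psi$ true. Your write-up is more explicit than the paper's sketch, in particular in spelling out the converse direction (taking the $\vec{\alpha}_i$ to be the witnesses) and in noting that the $\vec{\alpha}$ act as constants, so the Skolem dependencies are unchanged.
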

\begin{claimproof}
    Consider any winning strategy \(\mathcal{S}\) for the $\exists$-player of \(\Psi\), which would contain an assignment for \(\vec{\alpha}_1, \ldots, \vec{\alpha}_k\).
    Now, for \(\Psi\) to be true, $\mathcal{S}$ must differ from each $\mathcal{S}_i$ with \(\vec{\alpha}_i\) as a witness.
    Moreover, the Skolem functions in $\mathcal{S}$ for each \(x \in \vars_\exists(Q)\) must also satisfy \(\phi\) on each assignment to \(\vars_\forall(Q)\).
    Hence, any winning strategy for \(\Psi\) corresponds to a \(k+1\)\textsuperscript{th} distinct winning strategy for \(Q.\phi\), and vice-versa.
    Therefore, \(\Phi\) is a false QBF if and only if there are no other winning strategies for \(Q.\phi\).
\end{claimproof}

Observe that, as opposed to the qCounter solver~\cite{DBLP:journals/constraints/PlankMS24}, our naive system is capable of certifying the correct \#QBF number of winning strategies and with no restrictions on the number of alternations on the given QBF. However, any such system has trivial lower bounds whenever the number of distinct winning strategies is (doubly) exponential.

\subsection{An Expansion-based \#QBF Proof System: \texorpdfstring{\emice}{emice} }\label{sec:forexp_mice}
We can avoid an enumerative approach to build more advanced proof systems.
The expansion-based approach~\cite{JanotaM15} for QBF solving is a well studied technique which can be applied for \#QBF.
This can be used to transform the \#QBF problem to a \#SAT problem while preserving the model count.

To be precise, given a QBF $Q.\phi$, one can expand it over all universal variable assignments so that the resulting formula consists of clauses over only existential variables. To keep track of the existential variable dependency on the universal variables, annotate the existential variable $e$ with an assignment $\alpha_e$ to $L_Q(e)$ and denote it as a new existential variable $e^{\alpha_e}$, as described in~\cite{JanotaM15}. 

Formally, the following rule is allowed to download an axiom in expansion-based systems: $$\frac{}{\{e^{\alpha_e}~|~e\in C \text{ and }e \in\vars_\exists(Q)\}\cup\{\alpha(u)~|~u\in C\text{ and } u \in\vars_\forall(Q)\}},$$ where $C\in\phi$, $\alpha$ is an assignment to all $\vars_\forall(Q)$. $\alpha_e$ is the partial assignment from $\alpha$ to variables in $L_Q(e)$ when $e\in \vars_\exists(Q)$.

For example, consider a QBF $\exists e_1 \forall u_1 \exists e_2 \forall u_2 \exists e_3.~(e_1 \vee u_1 \vee e_2 \vee \overline{u}_2 \vee e_3)\land (\overline{u_1}\lor \overline{e_2})$. Let $\alpha=\{u_1=0,u_2=1\}$ be the assignment to all universal variables of the QBF, then the corresponding clauses in the expanded CNF are $(e_1\vee 0\vee  e_2^{u_1/0}\vee 0\vee e_3^{u_1/0,u_2/1}), (1\vee \overline{e_2}^{u_1/0})$.
After expanding the QBF over all assignments to the universal variables, we obtain a SAT instance, such that its models are in bijective correspondence with the models of the QBF.
We state this formally below.

\begin{proposition}
\label{prop:expmice}
    A true QBF $\Phi$, when expanded with all complete assignments to universal  variables, leads to a CNF formula $\phi$ such that every model of $\phi$ leads to a distinct model of $\Phi$ and vice versa. In other words, their model-counts are the same.
\end{proposition}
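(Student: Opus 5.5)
The plan is to build an explicit bijection between the winning strategies of $\Phi=Q.\psi$ (writing $\psi$ for the matrix, to avoid a clash with the expanded CNF $\phi$) and the satisfying assignments of $\phi$. The variables of $\phi$ are exactly the annotated variables $e^{\beta}$, where $e\in\vars_\exists(Q)$ and $\beta$ ranges over all assignments to $L_Q(e)$; there is one clause $C^{\alpha}$ for each $C\in\psi$ and each complete universal assignment $\alpha$. A clause containing a literal on some $u$ that $\alpha$ satisfies becomes trivially true. Otherwise, its universal literals are removed, because they are false under $\alpha$.

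First I define a map from strategies to assignments. Given a strategy $\mathcal{S}=\{S_e\}$, set $\sigma_{\mathcal{S}}(e^{\beta}):=S_e(\beta)$. This is well defined because $S_e$ depends only on $L_Q(e)$, and $\beta$ is exactly an assignment to $L_Q(e)$. Conversely, given an assignment $\sigma$ to $\vars(\phi)$, define $S_e(\beta):=\sigma(e^{\beta})$. This gives a Boolean function of exactly the arguments $L_Q(e)$, so it is a legitimate Skolem function. The two maps are mutual inverses, since a Skolem function $S_e$ is nothing but its truth table indexed by assignments $\beta$ to $L_Q(e)$, and that table is precisely the tuple $(e^{\beta})_\beta$. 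Hence strategies (not yet required to be winning) are in bijection with all assignments to $\vars(\phi)$.

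Next I show that $\mathcal{S}$ is winning if and only if $\sigma_{\mathcal{S}}\models\phi$. Fix a complete universal assignment $\alpha$. Under $\alpha$ the strategy plays $e=S_e(\alpha_e)=\sigma_{\mathcal{S}}(e^{\alpha_e})$, where $\alpha_e=\alpha|_{L_Q(e)}$. So for every $C\in\psi$, the joint assignment $(\alpha,\mathcal{S}(\alpha))$ satisfies $C$ if and only if $\sigma_{\mathcal{S}}$ satisfies $C^{\alpha}$. This is a literal-by-literal check: a universal literal of $C$ is true exactly when the corresponding constant in $C^{\alpha}$ is $1$, and an existential literal on $e$ takes the same value as the literal on $e^{\alpha_e}$. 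Quantifying over all $\alpha$ and all $C$, $\mathcal{S}$ wins on every universal play exactly when $\sigma_{\mathcal{S}}$ satisfies every clause of $\phi$.

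Since the bijection restricts to winning strategies and models, distinct models correspond to distinct winning strategies and vice versa, so the counts agree. I expect the main subtlety to be the bookkeeping of variables. An annotated variable $e^{\beta}$ should count as a variable of $\phi$ even if every clause mentioning it became trivially true. Otherwise free choices of Skolem values would be lost from the model count. I would therefore fix the convention that $\vars(\phi)$ contains all $e^{\beta}$. Alternatively, one could count models over that full variable set, which is necessary for the counts to match.
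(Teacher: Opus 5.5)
Your proposal is correct and follows essentially the same route as the paper's proof: the bijection $e^{\beta}\mapsto S_e(\beta)$ between assignments to the annotated variables and tuples of Skolem functions, with winning strategies corresponding exactly to models of the expansion. Two points in your version make precise what the paper leaves implicit: the literal-by-literal check, and the convention that models are counted over all annotated variables $e^{\beta}$, including those that occur only in trivially satisfied clauses. The paper uses this convention tacitly when it writes every assignment as $x_i^{L_Q(x_i)/\alpha[L_Q(x_i)]}=a_{i,\alpha[L_Q(x_i)]}$ for all $\alpha$ and $i$.
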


Before proving the bijection of Proposition~\ref{prop:expmice} for general QBFs, we would like to discuss a simple case below for the sake of intuition. 
\begin{example}
Consider a QBF $\Phi$ with one universal variable u, existential variables $x_1,..., x_n$, and its CNF matrix `$\chi$’. Assume that each $x_i$ depends on $u$. After expansion, the CNF obtained is $\phi = \chi(u=0, x_1^{u/0},..., x_n^{u/0}) \wedge  \chi(u=1, x_1^{u/1},..., x_n^{u/1}))$.

Any model of $\phi$ assigns a value to each ($x_i^{u/0}=a_{i, 0}$) and ($x_i^{u/1} = a_{i, 1}$), such that $\phi$ is satisfied. The corresponding model of the QBF $\Phi$ chooses ($x_i = a_{i, b}$) when $u = b$.
Given two different models for $\phi$, they must differ at some ($x_i^{u/b}$). Then, the corresponding winning strategies also differ, as the Skolem function for $x_i$ will differ at $u=b$.

Conversely, consider a winning strategy $\mathcal{S}$ for $\Phi$ with $\{S_{x_i}(u)\}$ as the individual Skolem functions. The corresponding satisfying assignment for $\phi$ assigns $x_i^{u/b} = S_{x_i}(b)$. Again, two different winning strategies lead to different satisfying assignments. Hence, we establish the required bijection for $\Phi$.
\end{example}
\begin{proof}[Proof of Proposition~\ref{prop:expmice}]
    Consider a QBF formula $\Phi=Q.\chi$ with the universal variables \(u_1,...,u_m\), existential variables \(x_1, \ldots, x_n\), and the CNF matrix \(\chi(u_1,\ldots,u_m, x_1, \ldots, x_n)\).
After complete semantic expansion of $\Phi$, the SAT instance we obtain is: 

$\phi:=\underset{\alpha\in\langle u_1,...,u_m\rangle}{\bigwedge} \chi\Big(u_1...u_m=\alpha, \big\{x_i^{L_Q(x_i)/\alpha[L_Q(x_i)]}~|~\text{ for }i\in[n]\big\}\Big)$

where $\langle U\rangle$ is the set of all complete assignments to variables in $U$, $L_Q(x)$ is the set of all universal variables occurring to the left of the existential variable $x$ in $Q$. $\alpha[L_Q(x)]$ is the set of individual $0/1$-assignments from $\alpha$ for variables in $L_Q(x)$. For example, if $\alpha=\{u_1=0,u_2=1,u_3=0\}$ then $\alpha[u_2,u_3]=[1,0]$ and $\{u_2,u_3\}/\alpha[u_2,u_3]=\{u_2=1,u_3=0\}$.

Now, any assignment to this SAT formula is of the form \(x_i^{L_Q(x_i)/\alpha[L_Q(x_i)]} = a_{i, \alpha[L_Q(x_i)]}\), where \(a_{i, \alpha[L_Q(x_i)]} \in \{0, 1\}, \alpha \in \{0^m,..., 1^m\}\) and \(i \in [n]\).
Below we show that any satisfying assignment $(x_i^{L_Q(x_i)/\alpha[L_Q(x_i)]} = a_{i, \alpha[L_Q(x_i)]})$ corresponds to a winning strategy of the QBF $\Phi$ where, the existential player chooses \(x_i = a_{i, \alpha[L_Q(x_i)]}\) if the universal chooses \(u = \alpha\) and vice-versa. 

One direction is easy: when a satisfying assignment for $\phi$ is known the corresponding winning strategy is simply to choose $x_i$ to be $a_{i,\alpha[L_Q(x_i)]}$. 
In the other direction: when a winning strategy $\mathcal{S}$ for $\Phi$ (i.e. $\{S_{x_i}(L_Q(x_i))\}$) is known the corresponding satisfying assignment would be to set $x_i^{L_Q(x_i)/\alpha[L_Q(x_i)]}=S_{x_i}(\alpha[L_Q(x_i)])$.

This is clearly a bijective correspondence, as any two 
distinct assignments, would differ in at least one $x_i^{L_Q(x_i)/\alpha[L_Q(x_i)]}$ value and the corresponding strategies would need to be different as $S_{x_i}(\alpha[L_Q(x_i)])$ is just a Boolean value and to represent both $0$ and $1$ (from both assignments), would require two different $S_{x_i}$ functions and vice-versa.
\end{proof}

In \emice, one needs to expand the QBF with all possible assignments to universal variables.
Otherwise, if one expands on a smaller set of assignments to obtain a CNF $\phi'\subset \phi$, one can only guarantee \#QBF$(\Phi)\leq \#\text{SAT}(\phi')$.  
This is because, for any CNF $\phi'$ obtained after dropping a few clauses from a CNF $\phi$, we have \#SAT$(\phi')\geq$\#SAT$(\phi)$.

\begin{corollary}\label{cor_exp}
    For any \#SAT proof system $\mathcal{P}$, the proofs in the \#QBF proof system \exp for a QBF $\Phi$ need exponential size in terms of $|\vars_\forall(\Phi)|$.
\end{corollary}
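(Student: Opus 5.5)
The plan is to argue that any \exp proof must at least contain (or be checked against) the fully expanded CNF, and that this CNF is exponentially large in $|\vars_\forall(\Phi)|$ for a suitable family of QBFs. The system \exp first expands $\Phi$ into a CNF and then gives a $\mathcal{P}$-proof of the model count of that CNF. By the discussion preceding the corollary, the expansion cannot be partial. If only a subset of the universal assignments were used, we would obtain $\phi'\subset\phi$ with only $\#\text{SAT}(\phi')\geq \#\text{QBF}(\Phi)$, and the count would not be certified. Hence soundness forces the system to use the complete expansion $\phi$ from the proof of Proposition~\ref{prop:expmice}, namely one copy of the matrix for each $\alpha\in\langle u_1,\ldots,u_m\rangle$.

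Next I will lower-bound $|\phi|$. Take any family of true QBFs in which some existential variable $e$ depends on all $m$ universal variables, for instance $\forall u_1\cdots\forall u_m\exists e.\,\chi$ where each clause of $\chi$ contains $e$. For each of the $2^m$ assignments $\alpha$, the expanded clauses contain the distinct annotated variable $e^{\alpha}$. These clauses cannot be merged or dropped: if a clause containing $e^\alpha$ were absent, then $e^\alpha$ would be free, or less constrained, and the model count would change. Consequently $\phi$ has at least $2^m$ distinct variables and clauses. Any $\mathcal{P}$-proof of $\#\text{SAT}(\phi)$ must refer to its input formula $\phi$, since a Cook--Reckhow verifier must be able to read which instance is being counted. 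The \exp proof therefore has size at least $|\phi|\geq 2^{m}=2^{|\vars_\forall(\Phi)|}$.

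The main obstacle is making precise the claim that an \exp proof ``contains'' the full expansion, as opposed to some succinct representation of it. I would address this by fixing the definition of \exp: a proof consists of the list of downloaded expansion axioms (one per clause $C$ and assignment $\alpha$) followed by a $\mathcal{P}$-proof of their count. The argument then reduces to showing that every annotated clause must be downloaded, and this follows from the strict inequality $\#\text{SAT}(\phi')>\#\text{QBF}(\Phi)$ whenever a clause that constrains some $e^\alpha$ is omitted. To guarantee that inequality, I would choose the family so that each expanded clause is non-redundant, for example by making each annotated copy $e^{\alpha}$ forced to a single value only by its own clause.
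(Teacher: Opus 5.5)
Your proposal is correct and follows the paper's own justification, which is the remark preceding the corollary: a partial expansion $\phi'\subset\phi$ only certifies $\#\text{SAT}(\phi')\geq\#\text{QBF}(\Phi)$, so a proof must contain the expansion over all $2^{|\vars_\forall(\Phi)|}$ universal assignments. Your additional care makes explicit what the paper leaves implicit: you require an existential depending on all universals with non-redundant annotated clauses (so the $2^m$ copies can be neither merged nor dropped), and you note that the $\mathcal{P}$-proof must contain the expanded instance; these conditions also hold for the families $\mathcal{Q}.\oplus_n$ and $\Gamma$ to which the paper applies the corollary.
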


\subsection{A Line-based \#QBF Proof System: \qmice}\label{Sub_sec:Q-MICE}
In \emice, eliminating all universal quantifiers at the beginning of the proof can cause an exponential explosion, even where the model count may otherwise be straightforward. In this section, we introduce a line-based \#QBF proof system \qmice.
The idea is that interleaving quantifier handling and other rules allows better control of the proof size in instances where the model count ought to be easy to determine. 
\qmice is inspired from \mice for \#SAT ~\cite{fichte2022proofs}.
A claim in \mice is of the form $(F, A, c)$ where $F$ is a CNF formula, $A$ is a partial assignment, and $c$ is the correct number of satisfying assignments to $F|_A$. Below, we define the \qmice system for \#QBF along similar lines, borrowing notations from~\cite{BeyersdorffHS24} for ease.

\subsubsection{Inference rules for \qmice}
In  \qmice, we use the claims of the following format: $(Q.F, A, c)$, where $Q$ is the quantifier prefix.
$F$ is a CNF formula (matrix), $c$ is a non-negative integer written in decimal\footnote{Future improvements could use succinct integer representations to avoid trivial exponential lower bounds.} and $A$ is a partial strategy.
In a partial strategy \(A\), some of the $\exists$-variables $x$ are allotted strategies in the form of a Skolem function $S_x(L_Q(x))$. The set of variables which has allotted strategies in \(A\) is represented by $\vars(A)$. One representation for the Skolem functions in $A$ is via Boolean circuits, one for each $x\in \vars(A)$. In this representation, \(A\) becomes a collection of circuits.

A claim $(Q.F, A, c)$ is said to be a \textbf{valid claim} if $c$ is the correct number of ways to extend $A$ to get a winning strategy $A'$ (by assigning strategies to the remaining $\exists$-variables) for the $\exists$-player in $Q.F$.
The \textbf{size} of a claim is the number of symbols needed to represent the claim in some fixed representation. For example, circuits in $A$ can be represented as CNFs using Tseitin transformation~\cite{Tse70}.
We need the following definitions for \qmice.

\begin{definition}[Restriction of $\forall$-variables in a QBF]\label{restriction_qbf}

Let $Q.F$ be a QBF. 
By a restriction of the $\forall$-variable ($y =b\in\{0,1\}$) in a QBF $Q.F$ (denoted as $Q.F|_{\{y=b\}}=Q'.F\land (y=b)=Q'.F'$), we mean that the quantification of $y$ is flipped to $\exists$ in the prefix $Q'$ and an additional unit clause forcing the restriction $y=b$ (i.e. if $b=1$ then $(y)$ else $(\overline{y})$) is added in the matrix. All other clauses in $F$ remain. 
\end{definition}

We start the description of the \qmice system by first defining its inference steps and providing examples to illustrate their usage.

\begin{definition}[Axiom rule] \label{axiom_def}
    For a QBF $Q.F$, one can derive: 
    {\fontsize{14.5}{17.4}\selectfont $\frac{}{(Q.F|_\rho,~A,~c)}$} 
    where
    \begin{description}
        \item [(A-1)] $\rho$ is a (partial) assignment to $\vars_\forall(Q)$ and $Q.F|_\rho$ is denoted as $Q'.F'$ (as in Def~\ref{restriction_qbf}).
        \item [(A-2)] $F|_{\{\rho\cup A\}}$ is a tautology and an explicit proof of this is provided.
        \item [(A-3)] $\vars(A)\subseteq \vars_\exists(Q)$, calculate $c$ as follows:
        \begin{itemize}
            \item Initially, $\mcount=1$. Starting from the innermost quantified variable in $Q$ (say $x$):
            \item if $x\in\vars_\exists(Q) ~\text{and}~ x\notin \vars(A)$ then: $\mcount \gets 2\cdot\mcount $ (i.e. double the count),
            \item if $x\in\vars_\forall(Q) ~\text{and}~ x\not\in\vars(\rho)$ then: $\mcount \gets\mcount^ 2$ (i.e. square the count),
            \item at the end of the reverse quantification sequence, $c\gets \mcount$.
        \end{itemize}
    \end{description}
\end{definition}

The intuition of Axiom rule is as follows: if $\phi$ be a tautology, then for any quantification $Q$ to $\vars(\phi)$, $Q.\phi$ will be a true QBF. In fact, all strategies are winning in $Q.\phi$, allowing us to easily count the number of model.
For instance, in the QBF from Example~\ref{running_eg}, we can derive the following axioms: 
$\big(\Phi|_{\{u_1=0\}}, \{e_1=1\},4\big)$, $\big(\Phi|_{\{u_2=0\}}, \{e_1=\overline{u}_1, e_2=0\},1\big)$, $\big(\Phi|_{\{u_1=1, u_2=0\}}, \{e_1=1, e_2=0\},1\big)$. Tautology proofs here are trivial. 

\begin{lemma}\label{axiom_sound}
    Any claim $(Q.F|_\rho,~ A,~ c)$ derived by using the Axiom rule of Definition~\ref{axiom_def} is valid i.e., $c$ is the correct number of models for the QBF $Q.F|_\rho \land {\bigwedge}_{x\in \vars(A)}(x\leftrightarrow A(x))$.
\end{lemma}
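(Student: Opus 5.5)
Since $F|_{\rho\cup A}$ is a tautology by (A-2), every way of fixing the unassigned existential variables yields a winning strategy. So the plan is to show that the number of such completions equals the number computed in (A-3), by induction over the prefix read from the innermost variable outward.

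First, the game reduction. Consider the QBF $Q'.F'$ obtained as in Definition~\ref{restriction_qbf}. The variables of $\rho$ become existential and are forced by unit clauses, and the variables in $\vars(A)$ are forced by the constraints $x\leftrightarrow A(x)$. For each restricted universal $y$, the unit clause leaves exactly one Skolem function that can occur in a winning strategy, namely the constant $b$. For $x\in\vars(A)$, the only admissible Skolem function is $A(x)$. Substituting all of these into $F$ gives $F|_{\rho\cup A}$, which is a tautology. Hence a strategy is winning for the constrained QBF if and only if it agrees with $\rho$ and $A$ on those variables, and the Skolem functions of the remaining existential variables are arbitrary. One point needs care: $A(x)$ may mention universals from $\rho$, which are now existential. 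Since those variables are constants under $\rho$, the substitution order of the paper's restriction convention handles this.

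Next, count the arbitrary completions. The number of winning strategies is $\prod_{x} 2^{2^{k_x}}$, where $x$ ranges over the free existential variables (those not in $\vars(A)$) and $k_x$ is the number of unrestricted universals in $L_Q(x)$. This counts all Boolean functions on $k_x$ inputs. Skolem functions that depend on the restricted universals count only once, because those universals have a single possible value.

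Finally, show that the procedure in (A-3) computes this product. I would argue by induction from the innermost variable outward, with the invariant that after processing a suffix of the prefix, $\mcount=\prod_{x\text{ free in suffix}} 2^{2^{k'_x}}$. Here $k'_x$ counts the unrestricted universals lying in the processed suffix and to the left of $x$. In the bottom-up tree reading, the invariant is the number of model-trees of the sub-game. Three cases cover the step. A free existential variable doubles $\mcount$, adding a factor $2=2^{2^0}$. A universal that is unrestricted increments each $k'_x$ in the suffix, which turns every factor $2^{2^{k}}$ into $2^{2^{k+1}}=(2^{2^k})^2$, so the whole product is squared. A variable from $\vars(A)$ or $\vars(\rho)$ leaves $\mcount$ unchanged. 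When the whole prefix has been processed, $k'_x=k_x$ for every $x$, so the procedure returns the product above.

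I expect the main obstacle to be the first step. The second and third are mechanical once the correspondence between restricted universals and constant Skolem functions is made precise.
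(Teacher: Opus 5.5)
Your proof is correct, but it takes a different route from the paper's.

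\textbf{The paper's argument.} It works entirely on the pruned assignment tree. It takes the bottom-up model-tree count from the preliminaries as the definition of the count. It observes that all leaves of the subtree for $Q'.F'\cup A$ are $1$, and then checks the three kinds of node case by case:
\begin{itemize}
\item a free existential node adds its two children's counts;
\item an unrestricted universal node multiplies them;
\item a node for a variable in $\vars(A)$ or $\vars(\rho)$ passes up the count of its single surviving child.
\end{itemize}

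\textbf{Your argument.} You work directly with Skolem functions. You first characterize the winning strategies of the constrained QBF:
\begin{itemize}
\item each restricted universal is forced to its constant;
\item each $x\in\vars(A)$ is forced to $A(x)|_\rho$;
\item each remaining existential gets an arbitrary function of the $k_x$ unrestricted universals in $L_Q(x)$.
\end{itemize}
This gives the closed form $\prod_x 2^{2^{k_x}}$. You then show, with a suffix invariant, that the doubling and squaring procedure of (A-3) evaluates this product.

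\textbf{What your route buys.} It never relies on the correspondence between model-trees and tuples of Skolem functions, which the paper uses without proof. It also makes explicit two points the tree argument passes over. First, the surviving child at an $A$-node can differ from path to path; in your setting this is harmless because the choice is a fixed function. Second, restricted universals drop out of the arguments of Skolem functions, which is exactly why they cause no squaring. Your phrase that ``the substitution order handles this'' would be cleaner stated as: in $Q'$ the forced Skolem function for $x\in\vars(A)$ is $A(x)|_\rho$, a function of the unrestricted universals only.

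\textbf{What the paper's route buys.} It is shorter. The match between (A-3) and the add/multiply semantics is immediate, with no closed form and no identity $(2^{2^k})^2=2^{2^{k+1}}$ needed. Its price is an implicit symmetry fact: to read ``add the children'' as doubling a single number $\mcount$, and ``multiply the children'' as squaring it, all nodes on a given level of the pruned tree must carry the same count. The paper uses this without stating it, whereas your invariant sidesteps it.
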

\begin{proof}
In this rule, we find a subtree ($Q'.F'\cup A$) of the assignment tree ($Q.F$) which has all $1$-leaves and compute all possible valid models in this subtree. This computation is just combinatorial (described in Section~\ref{prelims_sec}). For the correctness of this model-count, we argue inductively on the subtree ($Q'.F'\cup A$):  
Base case: the model-count at the leaves is $1$. Inductively, model-count at unrestricted existential nodes $x$ i.e. $x \not\in\vars(A)$ (or universal nodes $x$ i.e. $x \not\in\vars(\rho)$) is the addition (or multiplication) of model-counts at their two children owing to the structure of any model-tree. 
The remaining two cases of induction are restricted existential variables $x$ (i.e. $x \in\vars(A)$) or universal variables $x$ (i.e. $x \in\vars(\rho)$) of $Q$. The model-count in these cases is retained from the only remaining child as both are now existentially quantified in $Q'$ and need the only remaining outgoing edge in any model-tree.
\end{proof}

For a QBF $Q.F$, one can drop some variables and their strategies from the partial strategy part of the claims using one of the following composition rules. Note that all these rules require a proof for the absence of models statement, which is defined at the end.

\begin{definition}[Composition rules]\label{composition_def}
    Given a QBF $Q.F$,
\begin{enumerate}
\vspace{0.1cm}
    \item (Composition-a:)
    \hspace*{1cm}{\fontsize{14.5}{17.4}\selectfont$\frac{(Q.F|_\rho,~ A_1,~ 1),...,(Q.F|_\rho,~ A_n,~ 1)}{(Q.F|_\rho,~ A,~ n)}$}, where
    \begin{description}
\vspace{0.05cm}
        \item [(C-a1)] for all $i\in[n], \vars(A_i) = \vars_\exists(Q)$ and $\vars(\rho)\subseteq \vars_\forall(Q)$,
         \item [(C-a2)] for $i\not=j\in [n]$, $A_i\not= A_j$ (providing assignments witnessing that $A_i, A_j$ disagree),
         \item [(C-a3)] $A\subseteq \cap_{i\in [n]} A_i$ (a subset of strategies where all $A_i$s are syntactically equal).
    \end{description}
\vspace{0.1cm}
\item (Composition-b:)
    \hspace*{1cm} {\fontsize{14.5}{17.4}\selectfont$\frac{(Q.F|_\rho,~ A_1,~c_1)}{(Q.F|_\rho,~ A,~ c_1)}$}, where
    \begin{description}
\vspace{0.05cm}
        \item [(C-b1)] $A\subseteq A_1\subseteq \vars_\exists(Q)$ and $\vars(\rho)\subseteq \vars_\forall(Q)$. 
    \end{description}
\vspace{0.1cm}
\item (Composition-c:)
    \hspace*{1cm} {\fontsize{14.5}{17.4}\selectfont$\frac{(Q.F|_\rho,~ A_1,~c_1),...,(Q.F|_\rho,~ A_n,~ c_n)}{(Q.F|_\rho,~ A,~ \Sigma_{i\in n} c_i)}$}, where 
    \begin{description}
\vspace{0.05cm}
        \item [(C-c1)] $\vars(A_1)=...=\vars(A_n)\subseteq \vars_\exists(Q)$
        and $\vars(\rho)\subseteq \vars_\forall(Q)$,
        \item [(C-c2)] for $i\not=j\in [n]$, $A_i\not= A_j$ and $A\subseteq \cap_{i\in [n]} A_i$.
        \item [(C-c3)] Let $x$ be the innermost $\exists$-variable in $Q$ such that $x \in  \vars(A_i)\setminus \vars(A)$.
        \begin{itemize}
            \item For all $y\in \vars_\exists(Q)$ and $y\leq_Q x$ $\implies y\in\vars(A_i)$.
            \item For all $y\in \vars_\forall(Q)$ and $y\leq_Q x$ $\implies y\in \vars(\rho)$.
        \end{itemize}
    \end{description}
\end{enumerate}
For ease, we denote $Q.F|_\rho$ as $Q'.F'$ in these rules.
All these rules additionally need an FQBF-proof of the \textbf{absence of models statement}: $\Psi:= Q'.F'\cup A \cup \{\overline{A_i\setminus A}\}_{i\in n}$ where $n$ is the number of hypothesis claims.
$\Psi$ encodes as a QBF the negation of the fact that there exists no other winning strategies for the $\exists$-variables in $\vars(A_i)\setminus \vars(A)$. 
The exact encoding of this statement is provided in the next claim.
\end{definition}

\begin{claim}\label{absence_of_models}
$\Psi$ in the composition rules (Definition~\ref{composition_def}) can be encoded as a short QBF.
\end{claim}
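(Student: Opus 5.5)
We will mimic the encoding of Claim~\ref{absence_of_models_1}, now relativised to the restricted formula $Q'.F'$ and the fixed partial strategy $A$. Write $D:=\vars(A_i)\setminus\vars(A)$; by (C-a1)/(C-c1) this set is the same for every $i$. The statement to refute is: \emph{``there is a strategy for the $\exists$-variables of $Q'$ that agrees with $A$ on $\vars(A)$, is winning for $Q'.F'$, and differs from every $A_i$ on some variable of $D$.''}

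\textbf{Step 1 (witness variables).} Put fresh outermost existential blocks $\vec\alpha_1,\dots,\vec\alpha_n$ in front of $Q'$, where each $\vec\alpha_i$ has one copy $\alpha_{i,u}$ for every $u\in\vars_\forall(Q')$. The matrix is
$$F'\;\wedge\;\bigwedge_{x\in\vars(A)}\bigl(x\leftrightarrow A(x)\bigr)\;\wedge\;\bigwedge_{i\in[n]}\Bigl(\bigwedge_{u}(u\leftrightarrow\alpha_{i,u})\rightarrow\bigvee_{y\in D}\bigl(y\not\leftrightarrow S_{i,y}(\vec\alpha_i)\bigr)\Bigr).$$
Here $A(x)$ and $S_{i,y}$ are given by the circuits in the claims. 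A circuit evaluated at the constants $\vec\alpha_i$ becomes a circuit over the outermost variables $\vec\alpha_i$, so all these constraints are well formed.

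\textbf{Step 2 (make it CNF).} Convert the conjuncts with the Tseitin transformation. Tseitin variables for $S_{i,y}(\vec\alpha_i)$ depend only on $\vec\alpha_i$, so they can be quantified existentially right after $\vec\alpha_i$. Tseitin variables for $A(x)$, which depend on universal variables, go in the innermost existential block. The total size is then $O\bigl(|F|+\sum_i|A_i|+n\cdot|\vars_\forall(Q)|\cdot|D|\bigr)$, which is polynomial in the size of the hypotheses. This proves ``short''.

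\textbf{Step 3 (correctness).} As in the proof of Claim~\ref{absence_of_models_1}, a winning $\exists$-strategy for $\Psi$ fixes the constants $\vec\alpha_i$ and gives Skolem functions for the original variables. These functions are forced to agree with $A$ on $\vars(A)$ and to satisfy $F'$ on every universal play, and on the play $\vec\alpha_i$ they differ from $A_i$ on some $y\in D$. So they form a new extension of $A$ that differs from every $A_i$. Conversely, any such new extension, together with its witnesses, gives a winning strategy for $\Psi$. Hence $\Psi$ is false iff the $A_i$ are all the extensions of $A$ over $D$.

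\textbf{Main obstacle.} The delicate point is the dependency structure. Take the Skolem function of $y\in D$ and evaluate it at $\vec\alpha_i$: this value must be the value of $y$ on the universal play $\vec\alpha_i$. In the game, this is only enforced on the branch where the universal player actually plays $\vec\alpha_i$. That is exactly what the implication's premise $\bigwedge_u(u\leftrightarrow\alpha_{i,u})$ achieves. We must therefore check that the $\alpha$'s are quantified outermost, so that $y$ may depend on the universal variables of that branch. We must also check that the Tseitin variables for $A(x)$ sit to the right of all universal variables in $L_{Q'}(x)$. Under the flipped quantification of $\rho$ from Definition~\ref{restriction_qbf}, which is sound by Lemma~\ref{axiom_sound}'s reasoning, both conditions hold.
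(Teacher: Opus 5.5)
Your proposal is correct and is essentially the paper's proof: you write down the same $\Psi$ (the encoding of Claim~\ref{absence_of_models_1} relativised to $Q'.F'$, with conjuncts $x\leftrightarrow A(x)$ for $x\in\vars(A)$ and outermost witnesses $\vec\alpha_i$ over $\vars_\forall(Q')$), and your Tseitin placement, explicit size bound and two-directional correctness check spell out what the paper only sketches or leaves implicit. One small correction to your ``main obstacle'' paragraph: the $\vec\alpha_i$ must be outermost so that each witness is a fixed universal play chosen before the universal player moves (if some $u$ preceded $\alpha_{i,u}$, the $\exists$-player could set $\alpha_{i,u}\neq u$ and make every implication vacuous), not so that $y$ may depend on the branch; also, any inputs of $S_{i,y}$ fixed by $\rho$ should be set to their $\rho$-values when evaluating $S_{i,y}(\vec\alpha_i)$.
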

\begin{claimproof}
    This encoding is a generalization of that defined in Claim~\ref{absence_of_models_1}.
    For an input QBF $Q.F$, the composition rule in general is of the form where there are $n$ partial strategies $A_1,...,A_n$ over the same set of existential variables $X\subseteq \vars_\exists(Q)$ in the hypothesis, and a subset of these strategies ($X'\in X$) which appear in all of them are retained as $A$ in the conclusion.
    The `absence of models' statement encodes that there are no other winning strategies (in $Q'.F'$) for the variables in $X\setminus X'$ when also adhering to strategies from $A$.
    
    Recall that $A_i$s are a set consisting of one function $S_{A_i,x}$ for every $x\in \vars(A_i)$. So, in $Q'.F'$, enforcing the strategies in $A$ is simply adding clauses $(x\leftrightarrow S_{A,x})$ for every $x\in X'$.

    Now to see if there is any $n+1^{\text{th}}$ winning strategy for variables $\in X\setminus X'$, we need it to be different than those already in the hypotheses. As discussed in Section~\ref{prelims_sec}, for two strategies to be considered different, there must be some witness assignment ($\alpha$ below) to $\forall$-variables of $Q'$ (say $U'$) such that the value of the function computed by these strategies is different. Now, if the following QBF encoding is true, it implies that there indeed is a $n+1^{\text{th}}$ strategy which is missing in the hypothesis and the rule cannot be used.
    $$\Psi:=\exists \vec{\alpha}_1,...,\vec{\alpha}_n~Q'. F' \land \underset{x\in X'}{\bigwedge} \big(x\leftrightarrow S_{A,x}\big) \land \underset{i\in [n]}{\bigwedge} \bigg(\underset{u\in U'}{\bigwedge}(u\leftrightarrow {\alpha}_{i,u}) \rightarrow \underset{y\in X\setminus X'}{\bigvee} \Big(y\not\leftrightarrow S_{A_i,y}(\vec{\alpha}_i)\Big)\bigg)$$

    However, if the above encoding is a false QBF, it certifies that there is no such missed strategy and we can proceed to drop the strategy restrictions on variables $\in X\setminus X'$. 
\end{claimproof}

The intuition for the composition rules: Composition-a is useful to directly count complete winning strategies. Composition-b is useful when some variables have unique winning Skolem functions in the subtree. Composition-c is useful for combining subtrees which all agree on the restrictions of universal variables and individually are a path graph from the root to the innermost variable where the hypotheses strategies differ.

For instance, in the QBF from Example~\ref{running_eg}, $\Phi:=\forall u_1 \exists e_1 \forall u_2 \exists e_2~.~ (e_1\lor \overline{u}_2)\land (\overline{u}_1\lor u_2\lor \overline{e}_2)\land (\overline{u}_1\lor \overline{u}_2\lor e_2)$ we can derive the following claims.
Using composition-b, we can derive:
\[\frac{(\Phi|_{u_1=1},~ \{e_1=1,e_2=u_2\},~1)}{(\Phi|_{u_1=1},~ \{e_1=1\},~1)}\]
The absence of models statement 
$\Psi$ is:\[\exists \alpha_1 \exists u_1 \exists e_1 \forall u_2 \exists e_2~.~(e_1\lor\overline{u}_2)\lor (\overline{u}_1\lor u_2\lor\overline{e}_2)\land (\overline{u}_1\lor\overline{u}_2\lor e_2)\land (u_1)\land (e_1)
\land \big((u_2=\alpha_1)\rightarrow (e_2\not= \alpha_1) \big)\]
For an FQBF-proof of $\Psi$ in Frege+$\forall$red~\cite{fregeQBF}, we use propositional inferences to derive $(u_2\lor\alpha_1)$ and $(\overline{u}_2\lor\overline{\alpha}_1)$. Using universal reduction to drop variable $u_2$, we can derive {$(\alpha_1)$} and {$(\overline{\alpha}_1)$}, which gives the needed contradiction.\\
Using composition-a (or -c) we can derive the following:
\[\frac{(\Phi|_{u_1=0,u_2=0},~\{e_1=0, e_2=0\},~1) ~,~ (\Phi|_{u_1=0,u_2=0},~\{e_1=0, e_2=1\},~1)}{(\Phi|_{u_1=0,u_2=0},~\{e_1=0\},~2)}\]
Here, the corresponding absence of models statement is trivially false.

\begin{lemma}\label{composition_sound}
    Any claim $(Q'.F',~ A,~ c')$ derived by using the composition rules of Definition~\ref{composition_def} from valid hypothesis claims is also valid.
\end{lemma}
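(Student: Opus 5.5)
The plan is to handle the three composition rules separately. In each case I will compare two sets inside the restricted QBF $Q'.F'$ (with the restriction $\rho$ folded into the prefix and matrix as in Definition~\ref{restriction_qbf}):
\begin{itemize}
\item $W(B)$, the set of winning strategies extending a partial strategy $B$;
\item $W(A)$, the set of winning strategies extending the conclusion's partial strategy $A$.
\end{itemize}
Validity of a hypothesis $(Q'.F',A_i,c_i)$ means $|W(A_i)|=c_i$. The goal is to show that $W(A)$ is the disjoint union of the sets $W(A_i)$. The count $c'$ then equals the sum of the $c_i$, or $n$ in Composition-a, or $c_1$ in Composition-b. Disjointness and exhaustiveness will be proved separately.

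\textbf{Disjointness.} Since $A\subseteq A_i$ syntactically (conditions (C-a3), (C-b1), (C-c2)), every strategy in $W(A_i)$ also extends $A$, so $W(A_i)\subseteq W(A)$.
\begin{itemize}
\item In Composition-a and Composition-c, the $A_i$ are pairwise distinct with an explicit witness assignment, and they assign Skolem functions to the same variable set ((C-a1), (C-c1)). A winning strategy extending both $A_i$ and $A_j$ would need two different functions for the same variable, which is impossible. Hence the sets $W(A_i)$ are pairwise disjoint.
\item In Composition-a the $A_i$ are total, so each $W(A_i)$ is a single strategy. This matches the hypothesis count $1$.
\end{itemize}

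\textbf{Exhaustiveness.} This is where the absence of models statement $\Psi$ is used. Suppose some $\mathcal{S}\in W(A)$ lies in no $W(A_i)$. I will argue that $\Psi$ is then true, contradicting the supplied FQBF-proof; this relies on the soundness of Frege$+\forall$red.
\begin{itemize}
\item Restrict $\mathcal{S}$ to $X\setminus X'=\vars(A_i)\setminus\vars(A)$. For each $i$ it differs from $A_i$ at some variable $y$ and some universal assignment $\vec\alpha_i$. Use $\vec\alpha_i$ as the outer existential witnesses in the encoding of Claim~\ref{absence_of_models}.
\item Then $\mathcal{S}$ itself, suitably arranged, is a winning $\exists$-strategy for $\Psi$. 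It satisfies $F'$ because it wins $Q'.F'$, it satisfies the clauses $x\leftrightarrow S_{A,x}$ because it extends $A$, and it satisfies each implication because it differs from $A_i$ at $\vec\alpha_i$.
\end{itemize}

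\textbf{Main obstacle.} The delicate point is exactly this step, in the form of a subtlety about what "not in $W(A_i)$" means. The strategy $\mathcal{S}$ might agree with $A_i$ on all of $X\setminus X'$ but disagree on variables outside $\vars(A_i)$. That case is harmless: it would then lie in $W(A_i)$, since only $\vars(A_i)$ is constrained.

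A second issue is that $\Psi$ only says the strategy differs from each $A_i$ on $X\setminus X'$. One must check that differing at a witness assignment in the encoding genuinely corresponds to differing as functions. This needs the witness $y$ evaluated at a point reachable under $\rho$, and the $\forall$-variables of $\vec\alpha_i$ ranging only over $U'$.

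For Composition-c I expect condition (C-c3) to be needed precisely here. The variables to the left of the innermost dropped variable $x$ are all fixed, either by $A_i$ or by $\rho$. So the $A_i$ differ only along a path, and the sets $W(A_i)$ really partition by their values, with no double counting from differing Skolem functions on irrelevant branches. I will try first to show that under (C-c3) distinctness of the $A_i$ as syntactic objects implies semantic distinctness on a reachable branch. If that fails, I will fall back on the explicit witnesses required in (C-c2).
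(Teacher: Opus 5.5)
Your proof is correct, but it takes a different route from the paper's.

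\textbf{The paper's route.} The paper argues rule by rule on the assignment tree. In Composition-a, the hypotheses are distinct models and the absence-of-models proof excludes any others. In Composition-b, the dropped variables have a single useful branch. In Composition-c, it uses (C-c3) to see that the subtree for $Q'.F'\cup A$ is a path down to the outermost dropped variable, and then branches only on the dropped existential variables. The hypothesis branches carry $c_i$ and every other branch carries $0$, so the root value is $\sum_i c_i$.

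\textbf{Your route.} You prove a single partition statement, $W(A)=\bigsqcup_i W(A_i)$. Inclusion comes from $A\subseteq A_i$. Disjointness comes from the separating witnesses over a common variable set. Exhaustiveness comes from turning a missing strategy, together with its difference points, into a winning $\exists$-strategy for $\Psi$, which contradicts soundness of the FQBF system.

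\textbf{What each route buys.} Your version makes explicit how the witnesses and the FQBF-proof are used; the paper leaves both implicit. It also treats Composition-a and Composition-b as special cases of one argument. It shows that (C-c3) is not needed for soundness at all. The paper's tree picture mainly buys intuition for why Composition-c is designed as a path-merging rule.

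\textbf{One adjustment for Composition-c.} Drop the plan of deriving semantic distinctness from syntactic distinctness via (C-c3); it cannot work. Take $\forall u\,\exists e$ with $\rho=\{u=0\}$. The partial strategies $\{e=u\}$ and $\{e=0\}$ are different circuits and satisfy (C-c3). Yet they denote the same Skolem function in $Q'.F'$, so syntactic distinctness alone would permit double counting.

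Go straight to the explicit witnesses. The condition you raised for $\Psi$ is that witnesses are complete assignments to $U'=\vars_\forall(Q')$, with restricted variables evaluated at their $\rho$-values. That same condition is exactly what makes your disjointness step valid, in all three rules.
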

\begin{proof}

In composition-a, every hypothesis claim considers a full strategy for the existential player. That is, each hypothesis is a distinct model for the subtree corresponding to $Q'.F' \cup A$ and no other models exist for the same (due to absence of models proof). Therefore, adding them up is the correct number of models for the formula $Q'.F' \cup A$.  

In composition-b, there is only one hypothesis strategy for variables in $\vars(A_1)\setminus\vars(A)$ and no other winning strategies are possible due to the absence of models proof. 
So, there is only one useful branch from these $\exists$-variables in the subtree of the assignment tree, and these branches need to be in any model tree. Therefore, the count at the root remains the same. 

In composition-c, $x$ is the innermost variable from $\vars(A_i)\setminus\vars(A)$ and every variable to the left of $x$ either has some strategy in every $A_i$ or is restricted with $0/1$ in the QBF. Hence, the corresponding subtree of the assignment tree for these hypotheses is a line graph from root to $x$. Let $y$ be the outermost variable from $\vars(A_i)\setminus\vars(A)$. In the subtree corresponding to $Q'.F'\cup A$, it is a line graph up to $y$ and it only branches on every $\exists$-variable in $\vars(A_i)\setminus\vars(A)$ up to $x$ and at $x$ it has one branch with count $c_i$ and the other with count $0$ (due to the absence of models proof). Now the total count at $y$-node is just combinatorial and equal to $\Sigma_{i\in n}c_i$. As it is just a line graph until the root, this value propagates as is.  
\end{proof}

For a QBF $Q.F$, one can drop the $\forall$-variable restrictions  using the following join rule.

\begin{definition}[Join rule] \label{join_def}
    For a QBF $Q.F$, one can derive
    \[\frac{(Q.F|_{\rho\cup\{y=0\}},~ A,~ c_1),(Q.F|_{\rho\cup\{y=1\}},~ A,~ c_2)}{(Q.F|_{\rho},~ A,~ c_1\cdot c_2)}\]
    when the following conditions are satisfied:
    \begin{description}
        \item [(J-1)] $\rho$ should be a (partial) assignment to $\vars_\forall(Q)$ and $\vars(A)\subseteq \vars_\exists(Q)$.
        
        For ease, the quantifier in both antecedents is $Q'$ and $Q.F|_{\rho}$ is denoted as $Q''.F_3$.
        \item [(J-2)] $y \in \vars_\forall(Q)$, $y \in \vars_\exists(Q')$, $y \in \vars_\forall(Q'')$ and unit clauses ($y$), ($\overline{y}$) are dropped. 
        \item [(J-3)] For all $x\in \vars_\exists(Q)$ and $x\leq_Q y$ $\implies x\in\vars(A)$.
        \item [(J-4)] For all $x\in \vars_\forall(Q)$ and $x\leq_Q y$ $\implies x\in \vars(\rho)$.
    \end{description}
\end{definition}

The intuition behind the join rule is to combine two subtrees (say $G_1,G_2$) agreeing everywhere except for one restriction of a  $\forall$-variable ($y$) and $G_1,G_2$ are simple path graphs from the root to $y$.  
In
Example~\ref{running_eg}, $\Phi:=\forall u_1 \exists e_1 \forall u_2 \exists e_2~.~ (e_1\lor \overline{u}_2)\land (\overline{u}_1\lor u_2\lor \overline{e}_2)\land (\overline{u}_1\lor \overline{u}_2\lor e_2)$, using join we can derive:
\[\frac{(\Phi|_{u_1=0},~\{e_1=1\},~4) ~,~ (\Phi|_{u_1=1},~\{e_1=1\},~1)}{(\Phi,~\{e_1=1\},~4)}\]

\begin{lemma}\label{join_sound}
    Any claim $(Q.F|_\rho,~ A,~ c_1\cdot c_2)$ derived by using the join rule of Definition~\ref{join_def} from valid hypothesis claims is also valid.
\end{lemma}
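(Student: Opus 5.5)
We argue on the assignment tree, in the same style as the proofs of Lemma~\ref{axiom_sound} and Lemma~\ref{composition_sound}. A claim $(Q.F|_\sigma, A, c)$ is valid when $c$ is the number of model-trees of the subtree of the assignment tree of $Q.F$ obtained by fixing the universal variables in $\vars(\sigma)$ to their $\sigma$-values and the existential variables in $\vars(A)$ to the values of their Skolem functions. Equivalently, $c$ is the number of ways to extend $A$ to a winning strategy of $Q.F|_\sigma$. So we must show that the model-trees of the subtree $T$ for $Q.F|_\rho\cup A$ are counted by $c_1\cdot c_2$.

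First we use (J-3) and (J-4) to describe $T$ above the $y$-node. Every variable $x\leq_Q y$ with $x\neq y$ is either universal and lies in $\vars(\rho)$, or existential and lies in $\vars(A)$. The Skolem function $A(x)$ depends only on $L_Q(x)$, and these universals all lie to the left of $y$, hence in $\vars(\rho)$. So $A(x)$ evaluates to a constant along the path. Therefore $T$, from the root down to $y$, is a single path with no branching. Each node on it is a restricted variable, which is treated as existential with one child in $Q''$. This path ends at a single node labelled $y$, and $y$ is universal in $Q''$ by (J-2).

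Next we identify the two children of that $y$-node. The subtree below the edge $y=b$ is exactly the subtree for $Q.F|_{\rho\cup\{y=b\}}\cup A$. It has the same path prefix, and at $y$ only the edge $b$ survives because of the unit clause. Its model-tree count is $c_{b+1}$ by validity of the hypotheses. There is one subtle point: the Skolem functions in $A$ for variables to the right of $y$ may depend on $y$. Restricting $y=b$ in the hypothesis simply evaluates them at $y=b$, so $A$ induces the same partial strategy on each branch. This matches the intended meaning, since a strategy for $Q.F|_\rho$ restricted to the branch $y=b$ is a strategy for $Q.F|_{\rho\cup\{y=b\}}$.

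Finally we apply the bottom-up counting of Section~\ref{prelims_sec}. At the universal $y$-node the count is $c_1\cdot c_2$, because a model-tree must take both branches and the choices in the two branches are independent. This independence is exactly the bijection between extensions of $A$ for $Q.F|_\rho$ and pairs of extensions on the two branches: a Skolem function for a variable right of $y$ is determined by its two restrictions $y=0$ and $y=1$. Along the single path back to the root, every node has one child, so the count propagates unchanged, giving $c_1\cdot c_2$.

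The main obstacle is this bijection for Skolem functions of variables to the left of $y$ that are not in $A$. Such variables would have to agree across the branches, which would break the product. Conditions (J-3) and (J-4) exclude this case, so that step needs to be checked carefully.
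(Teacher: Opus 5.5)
Your proposal is correct and follows essentially the same route as the paper: (J-3) and (J-4) make the subtree for $Q.F|_\rho\cup A$ a single path down to the $y$-node, whose two branches are exactly the hypothesis subtrees with counts $c_1$ and $c_2$, so the universal node contributes $c_1\cdot c_2$, which then propagates unchanged to the root. Your extra details — Skolem functions in $A$ for variables left of $y$ evaluate to constants under $\rho$, and functions for variables right of $y$ are determined by their restrictions at $y=0$ and $y=1$ — only make explicit what the paper leaves implicit, and the ``obstacle'' in your last paragraph is already handled by your second paragraph via (J-3).
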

\begin{proof}
In the hypothesis, $y$ is a $\forall$-variable of $Q.F$ which is restricted and is now a $\exists$-variable in $Q'$. Every variable to the left of $y$ either has some strategy in $A$ or is restricted with $0/1$ in $\rho$ (due to J-3, J-4). Hence, the corresponding subtree of the assignment tree for these hypotheses is a path graph from root to $y$. The counts at root of hypotheses are the same at the $y$-node. Now, the conclusion subtree ($Q''.F_3\cup A$) is similarly a path graph from root up to $y$ and at $y$ branches with $0$, $1$. Now, any model of this subtree should have both branches in its model-tree, therefore $c_1\cdot c_2$ is the correct number of models in the subtree.
\end{proof}

\noindent We now define the \#QBF proof system \qmice using the inference rules introduced above.

\begin{definition}[\qmice]
A \qmice proof of a QBF $Q.F$ with $k$ winning strategies for the existential player is a sequence of lines $L_1,...,L_s$ where $L_s$ contains the claim $(Q.F,~ \emptyset,~ k)$ and every line $L_i$ for $i\in[s]$ is derived by one of the following rules:
\begin{enumerate}
    \item Axiom rule from Definition~\ref{axiom_def} along with the required tautology proof.
    \item One of the Composition rules from Definition~\ref{composition_def} with hypothesis from $L_1,...,L_{i-1}$, along with witnesses separating the hypothesis (partial) strategies, and absence of models proof
    \item Join rule from Definition~\ref{join_def} with hypothesis from $L_1,...,L_{i-1}$.
\end{enumerate}
The `length' of a \qmice proof is the number of lines in it, and its `size' is the size of all claims along with additional witnesses, FQBF-proofs and tautology proof sizes.
\end{definition}

A complete \qmice proof of $\Phi$ from Example~\ref{running_eg} is as follows:

\begin{example}
The QBF from Example~\ref{running_eg}, $\Phi:=\forall u_1 \exists e_1 \forall u_2 \exists e_2~.~ (e_1\lor \overline{u}_2)\land (\overline{u}_1\lor u_2\lor \overline{e}_2)\land (\overline{u}_1\lor \overline{u}_2\lor e_2)$, has a \qmice proof as follows:\\
$L_1:= \big(\Phi|_{\{u_1=0\}}, \{e_1=1\},4\big)$ (Axiom, tautology proof is easy)\\
$L_2:= \big(\Phi|_{\{u_1=1\}}, \{e_1=1, e_2=u_2\},1\big)$ (Axiom, tautology proof is easy)\\ 
$L_3:= (\Phi|_{\{u_1=1\}}, \{e_1=1\},1)$ (Composition-b on $L_2$) 
\begin{itemize}
    \item The absence of models statement is: 
$\Psi_1:=\exists \alpha_2 \exists u_1 \exists e_1 \forall u_2 \exists e_2~.~(e_1\lor\overline{u}_2)\lor (\overline{u}_1\lor u_2\lor\overline{e}_2)\land (\overline{u}_1\lor\overline{u}_2\lor e_2)\land (u_1)\land (e_1)\land \big((u_2=\alpha_2)\rightarrow (e_2\not= \alpha_2) \big)$.
\item FQBF-proof of $\Psi_1$ in Q-Res~\cite{KBKF95} is as follows after unit propagation steps:
{
\begin{prooftree}
    \AxiomC{$(u_2\lor\overline{e}_2)$}
    \AxiomC{$(u_2\lor\alpha_2\lor e_2)$}
    \LeftLabel{(Res)}
    \BinaryInfC{$(u_2\lor\alpha_2)$} \LeftLabel{($\forall$red)}
    \UnaryInfC{$(\alpha_2)$}
    \AxiomC{$(\overline{u}_2\lor e_2)$}
    \AxiomC{$(\overline{u}_2\lor\overline{\alpha}_2\lor \overline{e}_2)$}
    \RightLabel{(Res)}
    \BinaryInfC{$(\overline{u}_2\lor\overline{\alpha}_2)$}
    \RightLabel{($\forall$red)}
    \UnaryInfC{$(\overline{\alpha}_2)$}\
    \RightLabel{(Res)}
    \BinaryInfC{$\bot$}
\end{prooftree}
}
\end{itemize}
$L_4:= (\Phi,\{e_1=1\},4)$ (Join on $L_1, L_3$)\\
$L_5:= (\Phi,\emptyset,4)$ (Composition-b on $L_4$)
\begin{itemize}
    \item The absence of models statement is: $\Psi_2:=\exists \alpha_1 \forall u_1 \exists e_1 \forall u_2 \exists e_2~.~ (e_1\lor\overline{u}_2)\land (\overline{u}_1\lor u_2\lor \overline{e}_2)\land (\overline{u}_1\lor \overline{u}_2\lor e_2)\land \big((u_1\not=\alpha_1)\rightarrow (e_1\not=1)\big)$
    \item FQBF-proof of $\Psi_2$ in Q-Res~\cite{KBKF95} is as follows:
    {
    \vspace{-0.25cm}
    \begin{prooftree}
\def\defaultHypSeparation{\hskip .08in}
\AxiomC{$(u_1\lor\alpha_1\lor\overline{e}_1)$}
        \AxiomC{$(e_1\lor \overline{u}_2)$}\RightLabel{~($\forall$red)}
        \UnaryInfC{$(e_1)$} \LeftLabel{~(Res)}
        \BinaryInfC{$(u_1\lor\alpha_1)$} \RightLabel{($\forall$red)}
        \UnaryInfC{$(\alpha_1)$}

        \AxiomC{$(e_1\lor \overline{u}_2)$}
        \UnaryInfC{$(e_1)$}
        \AxiomC{$(\overline{u}_1\lor\overline{\alpha}_1\lor\overline{e}_1)$}
        \RightLabel{(Res)}
        \BinaryInfC{$(\overline{u}_1\lor\overline{\alpha}_1)$} \RightLabel{($\forall$red)}
        \UnaryInfC{$(\overline{\alpha}_1)$} \LeftLabel{~(Res)}
        \BinaryInfC{$\bot$}
    \end{prooftree}
    }
\end{itemize}
\end{example}

\begin{theorem}\label{thm:sound_complete_qmice}
\qmice is a polynomial-time verifiable sound and complete \#QBF system.
\end{theorem}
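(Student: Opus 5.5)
The proof splits into three parts: polynomial-time verifiability, soundness, and completeness.

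\textbf{Polynomial-time verifiability.} I will check that each rule's side conditions can be verified in time polynomial in the proof size. For the Axiom rule, conditions (A-1) and (A-3) are syntactic. The count $c$ is obtained by a single pass over the prefix; it doubles or squares a number whose decimal length is part of the claim, so the computation is bounded by the size of the written $c$ up to a polynomial (we may simply compare intermediate values against the claimed $c$). Condition (A-2) comes with an explicit tautology proof, which is checked in its chosen propositional proof system (e.g.\ Frege).

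For the composition rules, the conditions on variable sets and prefixes are syntactic, and disagreement of the $A_i$ is checked by evaluating the circuits on the supplied witness assignments. The absence-of-models statement $\Psi$ is built in polynomial size as in Claim~\ref{absence_of_models}, and its Frege+$\forall$red refutation is checked in polynomial time. The Join rule conditions (J-1)--(J-4) are purely syntactic. Arithmetic on $c$ (sums and products) is polynomial in the bit-lengths.

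\textbf{Soundness.} I argue by induction on the line index that every line is a valid claim, using Lemma~\ref{axiom_sound}, Lemma~\ref{composition_sound} and Lemma~\ref{join_sound}. The final line $(Q.F,\emptyset,k)$ with $\rho=A=\emptyset$ then says that $k$ is exactly the number of winning strategies.

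\textbf{Completeness.} I give a generic (exponential-size) proof for any true QBF $Q.F$ with $k$ winning strategies. The plan mirrors the bottom-up counting on the assignment tree, via a top-down recursion over the prefix that maintains a claim $(Q.F|_\rho, A, c)$, where $\rho$ and $A$ describe a path in the tree with $A$ consisting of constant Skolem functions. I proceed by induction from the innermost variable outward, proving that for each prefix-consistent path (constants on all variables left of some position, restricting universals in $\rho$ and existentials in $A$), the claim $(Q.F|_\rho, A, c)$ with the correct count $c$ is derivable.

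\emph{Base case (full paths).} For a full path whose leaf is $1$, the Axiom rule gives count $1$, since $F|_{\rho\cup A}$ is a variable-free true formula.

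\emph{Innermost variable universal.} Apply Join to the two children. Conditions (J-3) and (J-4) hold because everything to the left is fixed.

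\emph{Innermost variable existential $x$.} Collect the children with nonzero count and apply Composition-c with $A=A_i\setminus\{x\}$. Condition (C-c3) holds since everything left of $x$ is fixed. If no child has a nonzero count, the node has count $0$ and is not needed, since a true QBF's derivation only uses nonzero nodes. If exactly one child has a nonzero count, use Composition-c with $n=1$, or equivalently Composition-b. The absence-of-models statement $\Psi$ is false precisely because the missing branch has count $0$, i.e.\ the subformula with $x$ at the other value is false.

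The step I expect to be the main obstacle is the existential case. Here we must ensure that an FQBF refutation of $\Psi$ exists, which follows from the completeness of Frege+$\forall$red (or Q-Res) for false QBFs. We must also check that the witness-based encoding of $\Psi$ is indeed false when all surviving branches are listed: any new strategy must use some constant for $x$ already present, so it coincides on $x$ with some $A_i$, and the witness $\vec{\alpha}_i$ cannot separate them.

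At the root, $\rho=A=\emptyset$, and the count equals the bottom-up value, which is $k$.
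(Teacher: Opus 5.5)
Your proposal is correct. For soundness and polynomial-time verifiability it matches the paper, which cites the three soundness lemmas and asserts that each rule is easily checkable. Your observation that the intermediate values in (A-3) only increase, and so can be checked against the written $c$, is a detail the paper omits.

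For completeness you take a genuinely different route. The paper's argument is a single step:
\begin{itemize}
\item each of the $k$ winning strategies $\mathcal{S}_i$ becomes an Axiom $(Q.F,\mathcal{S}_i,1)$, where the tautology $F|_{\mathcal{S}_i}$ has a proof by propositional completeness;
\item all of these are merged by one Composition-a with $A=\emptyset$;
\item the absence-of-models statement lists every winning strategy, so it is false and any complete FQBF system refutes it.
\end{itemize}

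You instead simulate the bottom-up counting on the assignment tree:
\begin{itemize}
\item Axioms only at $1$-leaves, whose tautologies are variable-free and trivial;
\item Join at universal nodes;
\item Composition-c (or -b) at existential nodes. Here $\Psi$ is refuted by two $\forall$-reductions when both branches survive, and by refuting the false sub-QBF when only one survives.
\end{itemize}

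The trade-off is as follows. The paper's route is shorter and uses a single composition rule. However, it yields a proof with $k+1$ lines, full strategies in every line, and pairwise witnesses, where $k$ may be doubly exponential. Your route bounds the number of lines by the size of the assignment tree, independently of $k$. It also shows that Join and Composition-c alone reproduce the tree-counting procedure.

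The price is that you must check the finer side conditions (J-3), (J-4) and (C-c3) while processing variables one at a time. These hold because the preliminaries define $\le_Q$ as strict block precedence ($j<i$). Hence same-block variables that come later in your tree order need not be fixed. It would be worth saying this explicitly rather than ``everything to the left is fixed''.

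Finally, your induction should be stated only for nodes with nonzero count, which is what the argument actually uses:
\begin{itemize}
\item leaves labelled $1$;
\item universal nodes, both of whose children are then nonzero;
\item existential nodes with at least one nonzero child.
\end{itemize}
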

\begin{proof}
Since all the inference rules are sound (Lemmas~\ref{axiom_sound},\ref{composition_def},\ref{join_sound}), the \qmice proof system is sound. 
For completeness: Use the Axiom rule for every winning strategy of $Q.F$ with $c=1$. Using Composition-a rule with $A=\emptyset$, add these axiom claims to derive the correct \#QBF answer: 
If you consider all winning strategies in the absence of models statement, it is indeed a false QBF and any complete FQBF-proof system suffices. Also every rule of \qmice is easily verifiable, hence any \qmice-proof is verifiable in time polynomial in the proof-size.
\end{proof}

Note that if the input QBF is false, one can use any one of the composition rules with no hypothesis strategies to derive the answer that \#QBF answer is $0$. However, this is equivalent to directly using a FQBF proof system to to prove that the input QBF is false.

\section{Upper bounds}\label{sec:upper_bound}
In this section, we show the strength of \qmice by proving upper bounds for two example QBF families that have roots in theory~\cite{BeyersdorffHS24} and practice~\cite{IhsanD23}, respectively.
\subsection{A QBF family based on \textsf{XOR-PAIRS}}\label{sec:xor_upperbound}

$$\mathcal{Q}.\oplus_n:=\forall u_1, ..., u_n~\underset{i\not=j\in[n]}{\exists} x_{i,j}~.~ {\bigwedge}_{i\not=j\in[n]}(x_{i,j}\leftrightarrow u_i\oplus u_j)$$ 

$\mathcal{Q}.\oplus_n$ has $n^2$ variables out of which 
$n$ are universal variables, hence it requires super-polynomial ($\Omega(2^n)$) sized proofs in the \emice system (Corollary~\ref{cor_exp}). 

\noindent Below, we show that $\mathcal{Q}.\oplus_n$ has a constant length and polynomial size \qmice proof $\pi=\{L_1,L_2\}$, where $L_1$ is derived using the Axiom rule consisting of the entire QBF $\mathcal{Q}.\oplus_n$, a complete winning strategy $A = \{S_{x_{i,j}}= u_i\oplus u_j | i\not=j\in[n]\}$ and the count $1$, accompanied by a tautology proof of $\oplus_n|_A$. Line $L_2$ is derived by applying the Composition-b rule on $L_1$ by dropping the strategy restrictions in $A$ accompanied by a FQBF proof of the absence of models statement. This maintains the previous count, hence \#QBF$(\mathcal{Q}.\oplus_n) = 1$. Precisely:

(Axiom rule) $L_1:=\big(\mathcal{Q}.\oplus_n,~ A:=\{S_{x_{i,j}}= u_i\oplus u_j | i\not=j\in[n]\},~1\big);$ 
the tautology proof of $\oplus_n|_A$ is a straightforward case of any introduction rule of $\leftrightarrow$ in any Frege system.

\noindent (Composition-b rule) $L_2:=\big(\mathcal{Q}.\oplus_n,~ \emptyset,~ 1\big);$ 
the absence of models statement as per Claim~\ref{absence_of_models} is:

$\Psi:= \exists \vec{\alpha}_{1}~\forall u_1, ..., u_n~{\exists}_{i\not=j\in[n]} x_{i,j}~.~{\bigwedge}_{i\not=j\in[n]}(x_{i,j}\leftrightarrow u_i\oplus u_j)\\\hspace*{1cm}\land \big((u_1=\alpha_{1,u_1}\land ...\land u_n=\alpha_{1,u_n})\rightarrow ({\bigvee}_{i\not=j\in[n]} x_{i,j}\not=\alpha_{1,u_i}\oplus \alpha_{1,u_j})\big)$

\noindent FQBF proof of $\Psi$ using Frege+$\forall$red : (for simplicity, we denote $\alpha_{1,u_i}$ as $\alpha_i$)
{\small
\begin{prooftree}
\AxiomC{$x_{i,j}\leftrightarrow u_i\oplus u_j$}
\AxiomC{$(u_1=\alpha_1\land ...\land u_n=\alpha_n)\rightarrow (\underset{i\not=j\in[n]}{\bigvee} x_{i,j}\not=\alpha_i\oplus \alpha_j)$} \RightLabel{(Prop. Frege inference)} 
\BinaryInfC{$(u_1=\alpha_1\land ...\land u_n=\alpha_n)\rightarrow (\underset{i\not=j\in[n]}{\bigvee} u_i\oplus u_j\not=\alpha_i\oplus \alpha_j)$} \RightLabel{($\forall$red with $u_i=\alpha_i$ for $i\in[n]$)}
\UnaryInfC{$(\alpha_1=\alpha_1\land ...\land \alpha_n=\alpha_n)\rightarrow (\underset{i\not=j\in[n]}{\bigvee} \alpha_i\oplus \alpha_j\not=\alpha_i\oplus \alpha_j)$} \RightLabel{(Prop. Frege inference)}
\UnaryInfC{$\bot$}
\end{prooftree}
} 
\subsection{An Indexed Affine QBF Family}

$$\Gamma=\mathcal{Q}.\chi_n:= \forall u_1 \ldots u_{\log n} ~\exists x ~\forall v_1 \ldots v_n ~\exists y_1 \ldots y_n~.~
{\bigwedge}_{i \neq (u)} (y_i = v_i \oplus x) \wedge (y_u = v_u \oplus \overline{x})$$

\noindent The QBF $\Gamma$ can be thought of as a simple encryption process, where the function for \(x\) w.r.t. the index `\(u\)' is the encryption scheme, which is applied to the plain-text `\(v\)' to obtain `\(y\)' as the cipher-text.
As the index \(u\) for \(y_u\) and \(v_u\) is denoted by its binary expansion represented by \(u_1, \ldots, u_{\log(n)}\), it's not immediately clear how to encode the matrix as a CNF.
However, $u$ has only $\log n$ bits so one way to represent $y_u, v_u$ is to go through each assignment $u \in \{0,1\}^{\log n}$ in some order.

Following this idea, given an assignment to \(a \in \{0, 1\}^{\log(n)}\), we create four clauses to represent the condition \((u_1 = a_1, \ldots, u_{\log(n)} = a_{\log(n)}) \rightarrow (v_a = y_a \oplus \overline{x})\). 
For example, if \(a_1 = \ldots = a_{\log(n)} = 0\), the four clauses are:
\begin{eqnarray*}
    u_1 \vee \ldots \vee u_{\log(n)} \vee v_0 \vee y_0 \vee x &~~~ u_1 \vee \ldots \vee u_{\log(n)} \vee v_0 \vee \overline{y}_0 \vee \overline{x} \\
    u_1 \vee \ldots \vee u_{\log(n)} \vee \overline{v}_0 \vee y_0 \vee \overline{x} &~~~ u_1 \vee \ldots \vee u_{\log(n)} \vee \overline{v}_0 \vee \overline{y}_0 \vee x
\end{eqnarray*}
Similarly, for each \(a' \neq a\), we create four clauses to represent the condition \((u_1, \ldots, u_{\log(n)}) = (a_1, \ldots, a_{\log(n)}) \rightarrow v_{a'} = y_{a'} \oplus x\).
In total, we have \(O(n)\) clauses for each assignment \(a \in \{0, 1\}^{\log(n)}\), hence, the resulting CNF has \(O(n^2)\) clauses.

Based on the values of  $u=\langle u_1,...,u_{\log n}\rangle$, $x$ and all $v_i$ for $i\in[n]$, there is a single way to pick the value of all $y_i$s for $i\in[n]$ to make $\Gamma$ true and the number of possible Skolem functions for $x$ (i.e. $S_x(u_1, \ldots, u_{\log(n)})$) is \(2^n\).
Since $\Gamma$ has $n+\log n$ $\forall$-variables and $2^n$ winning strategies, it requires exponential-size proofs in both the naive system and  \emice  (Corollary~\ref{cor_exp}). 

\vspace{0.2cm}
A linear length and polynomial size \qmice proof for $\Gamma$ is as follows (for ease, we represent $\langle u_1...u_{\log n}\rangle$ as $\vec{u}$):
For each $\beta\in[n]$, derive the following (below $i\in[n]\setminus\beta$):
{\small
\begin{prooftree}
\def\defaultHypSeparation{\hskip .05in}
\AxiomC{~} 
\UnaryInfC{$(\Gamma|_{\vec{u}=\beta},~ \big\{x=0, y_\beta=\overline{v}_\beta,y_i=v_i\big\},~ 1)$}
   \UnaryInfC{$(\Gamma|_{\vec{u}=\beta},~ \{x=0\},~ 1)$}
\AxiomC{~} \LeftLabel{(Axiom)~}
\UnaryInfC{$(\Gamma|_{\vec{u}=\beta},~ \big\{x=1, y_\beta=v_\beta,y_i=\overline{v}_i \big\},~ 1)$} \LeftLabel{(Comp-b)}
   \UnaryInfC{$(\Gamma|_{\vec{u}=\beta},~ \{x=1\},~ 1)$} \LeftLabel{(Comp-c)}
   \BinaryInfC{$L_\beta':=(\Gamma|_{\vec{u}=\beta},~ \emptyset,~ 2)$}
\end{prooftree}
}
We used composition-b and composition-c rules in the above derivation. These rules require an FQBF proof of the corresponding absence of models statements. We next describe the absence of model statement for the composition-b rules.
For ease, we represent $\langle v_1...v_{n}\rangle$ as $\vec{v}$ below.
For each $\beta\in[n]$ and $q\in\{0,1\}$, we give the absence of models statement $\Psi_{\beta,x=q}$ for the Composition-b rule below (for ease, we only show useful clauses in the matrix):\\
$\Psi_{\beta,x=q}:=\exists \vec{\alpha}_{1} ~\exists \vec{u} ~\exists x ~\forall \vec{v} ~\exists y_1,...,y_n~.~  \big(\vec{u}=\beta\rightarrow(v_\beta = y_\beta\oplus \overline{x})\land {\bigwedge}_{i\in[n]\setminus \beta}(v_i = y_i\oplus x) \big)\\\hspace*{1.5cm} \land (x = q) \land(\vec{u} = \beta) \land \Big((\vec{v}=\vec{\alpha}_1)\rightarrow (y_\beta\not= \alpha_{1,v_\beta}\oplus \overline{x})\lor {\bigvee}_{i\in[n]\setminus \beta}(y_i\not= \alpha_{1,v_i}\oplus x)  \Big)$

FQBF proof of $\Psi_{\beta,x=q}$ in Frege+$\forall$red proof system (after propagating $x=q$ and $\vec{u}=\beta$) is given below. For simplicity we denote $\alpha_{1,v_i}$ as $\alpha_i$ and the soundness of inference-lines $A,B,C$ are explained after the derivation:
{\small
\begin{prooftree}
\def\defaultHypSeparation{\hskip .02in}
\AxiomC{$(v_\beta = y_\beta\oplus \overline{q})\land \underset{i\in[n]\setminus \beta}{\bigwedge}(v_i = y_i\oplus q)$}
\AxiomC{$\Big((\vec{v}=\vec{\alpha}_1)\rightarrow (y_\beta\not= \alpha_\beta\oplus \overline{q})\lor \underset{i\in[n]\setminus \beta}{\bigvee}(y_i\not= \alpha_i\oplus q)  \Big)$} \RightLabel{(A)}
\BinaryInfC{$\Big((\vec{v}=\vec{\alpha}_1)\rightarrow (v_\beta\oplus\overline{q} \not= \alpha_\beta\oplus \overline{q})\lor \underset{i\in[n]\setminus \beta}{\bigvee}(v_i\oplus q \not= \alpha_i\oplus q)  \Big)$} \RightLabel{(B)}
\UnaryInfC{$\Big((\vec{\alpha}_1=\vec{\alpha}_1)\rightarrow (\alpha_\beta\oplus\overline{q} \not= \alpha_\beta\oplus \overline{q})\lor \underset{i\in[n]\setminus \beta}{\bigvee}(\alpha_i\oplus q \not= \alpha_i\oplus q)  \Big)$} \RightLabel{(C)}
\UnaryInfC{$\bot$}
\end{prooftree}
}

In this FQBF proof, we are only left with the soundness of inference lines A,B and C.
For $(A)$, the hypothesis is the input QBF and the conclusion is derived by propositional Frege inferences which substitute $y_i,y_\beta$ variables with the functions $v_i\oplus q, v_\beta\oplus \overline{q}$ respectively.
To derive $(B)$, since $\vec{v}$ are the rightmost variables in the hypothesis clause, the conclusion is derived by a $\forall$red rule which substitutes $\vec{v}=\vec{\alpha_1}$.
Finally, the hypothesis in $(C)$ is a contradiction as it's equivalent to true$\rightarrow$ false, allowing us to derive \(\bot\).

\vspace{0.2cm}
\noindent For each $\beta\in[n]$, the absence of models statement for the Composition-c rule is :

$\Psi_\beta:=\exists \vec{\alpha}_1, \vec{\alpha}_2 ~\exists \vec{u} ~\exists x ~\forall \vec{v} ~\exists y_1,...,y_n~.~ \chi_n \land(\vec{u}=\beta) \land \big( (\vec{v}=\vec{\alpha}_1)\rightarrow (x\not=0) \big) \land \big( (\vec{v}=\vec{\alpha}_2)\rightarrow (x\not=1) \big)$.
Since $\vec{v}\geq_{\Psi_\beta} x$, the Frege+$\forall$red proof of $\Psi_\beta$ is easy: use $\forall$red rule to substitute $\vec{v}=\vec{\alpha}_1$ and $\vec{v}=\vec{\alpha}_2$ in the two axiom clauses to derive $(x\not=0)$ and $(x\not=1)$, a contradiction.

\noindent So far in \qmice proof, we have derived lines $L_\beta'$ for every $\beta\in[n]$. Now we can apply the join rule $n-1$ times and remove the restriction of universal variables $u_{\log n}$ to $u_1$ (in this order). This results in the last claim being $(\Gamma,~\emptyset,~ 2^n)$.

The two upper bounds describe above, put together, allow us to prove the following separation between the proof systems we have considered.

\begin{theorem}\label{seperation_theorem}
    The \#QBF proof system \qmice is exponentially separated from the naive \#QBF system and the \emice proof system.
\end{theorem}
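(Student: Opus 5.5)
We prove the separation by exhibiting a single family that has polynomial-size \qmice proofs but needs exponential-size proofs in both the naive system and \emice. The natural witness is the indexed affine family $\Gamma=\mathcal{Q}.\chi_n$ of the previous subsection. The XOR-PAIRS family $\mathcal{Q}.\oplus_n$ can serve as a second witness against \emice.

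\textbf{Upper bound.} We recall the \qmice proof constructed above for $\Gamma$. For each $\beta\in[n]$ it uses two Axiom lines, two Composition-b steps and one Composition-c step to derive $L'_\beta=(\Gamma|_{\vec u=\beta},\emptyset,2)$. Then $n-1$ join steps on $u_{\log n},\dots,u_1$ produce $(\Gamma,\emptyset,2^n)$. We check the size:
\begin{itemize}
\item There are $O(n)$ lines in total.
\item Each partial strategy consists of $O(n)$ constant-size circuits.
\item Each tautology proof and each Frege+$\forall$red proof of the statements $\Psi_{\beta,x=q}$ and $\Psi_\beta$ has size polynomial in $n$, since every step substitutes into or $\forall$-reduces formulas of size $\mathrm{poly}(n)$.
\item The counts are written in decimal. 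The largest is $2^n$, which has $O(n)$ digits, and intermediate products $c_1\cdot c_2$ stay below $2^n$.
\end{itemize}
Hence the whole proof has size $\mathrm{poly}(n)$, while $|\Gamma|=O(n^2)$ clauses. For $\mathcal{Q}.\oplus_n$ the two-line proof above also has polynomial size.

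\textbf{Lower bounds.} For \emice we invoke Corollary~\ref{cor_exp}. A \emice proof must contain the full expansion over all assignments to the $n+\log n$ universal variables of $\Gamma$. Because the $y_i$ depend on all of $\vec v$, this expansion yields $2^{\Omega(n)}$ distinct annotated variables $y_i^{\vec u,\vec v}$, so the proof has size $2^{\Omega(n)}$. For the naive system, any proof enumerates all winning strategies explicitly. Since $\Gamma$ has exactly $2^n$ winning strategies, one per choice of the Skolem function $S_x(\vec u)$ with $\vec u$ ranging over $n$ values, the proof has size at least $2^n$.

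\textbf{Main obstacle.} The delicate point is verifying that $\#\mathrm{QBF}(\Gamma)=2^n$, i.e.\ that the $y_i$ are uniquely determined. This holds because, once $\vec u$, $x$ and $\vec v$ are fixed, the matrix fixes each $y_i$ as an affine function of $v_i$ and $x$. The lower-bound arguments and the soundness of the Composition-c and join steps both rely on this count. A second point to handle is that Corollary~\ref{cor_exp} gives a bound exponential in $|\vars_\forall|$, and we need this to be exponential in the input size. Since the input size is $\mathrm{poly}(n)$ and $|\vars_\forall|\ge n$, the bound $2^{\Omega(n)}$ is exponential in the input size, so the separation is exponential.
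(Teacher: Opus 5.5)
Your proposal is correct and follows the paper's argument. The paper likewise combines the polynomial-size \qmice proofs of $\Gamma$ (and of $\mathcal{Q}.\oplus_n$) from Section~\ref{sec:upper_bound} with Corollary~\ref{cor_exp} for \emice and with the $2^n$ winning strategies of $\Gamma$ for the naive system. Your explicit size accounting, and the observation that the annotated copies $y_i^{\vec u,\vec v}$ force $2^{\Omega(n)}$ distinct variables in the expansion, spell out details the paper leaves implicit.
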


\section{Discussion and Future Work}
The paper proposes a \#QBF proof system \qmice based on the \#SAT proof system \mice~\cite{fichte2022proofs} and proves that it is exponentially stronger than the naive (Section~\ref{naive_sec}) and \emice proof systems (Section~\ref{sec:forexp_mice}). For this, we introduced two new families of true QBFs: the quantified XOR-PAIRS and the indexed affine QBFs. We give easy \qmice proofs for both the formulas, and show that they are hard for the naive and \emice proof systems. 

Recently, a \#QBF solver d4-QBF~\cite{CapelliLPS24} has been introduced. One open problem is to compare the strength of \qmice and the d4-QBF solver. 
It is easy to observe that \qmice is capable of certifying all the rules of d4-QBF except the decomposition rule. However, \qmice is modular, in the sense that, based on the requirements, one can always add more sound \#QBF inference rules like the decomposition rule from~\cite{CapelliLPS24}. One way to incorporate the decomposable rule is to add it at the beginning of a \qmice proof whenever possible. To be precise,
suppose $Q.\phi$ is the input QBF. Run the `connected-component’ function of d4-QBF, and let it returns, $Q_1.\phi_1, Q_2.\phi_2, Q_3.\phi_3$ with disjoint variables.  Since \qmice is complete, running it on $Q_1.\phi_1$ will eventually derive the claim $(Q_1.\phi_1,\null, c_1)$ and similarly, the claims with $c_2, c_3$. Finally, the model-count of $Q.\phi$ would be $c_1\cdot c_2\cdot c_3$ owing to the correctness of the decomposition step.

The challenge of doubly exponential solutions may be mitigated in \qmice by representing integers by arithmetic circuits. 
An immediate open problem is to establish a genuine lower bound~\cite{DBLP:conf/stacs/BeyersdorffB18} for \qmice. We conjecture that the propositional XOR-PAIRS (which are hard for MICE~\cite{BeyersdorffHS24}) would be hard for \qmice but it is not a genuine lower bound.
Another open problem is to extend other \#SAT proof systems \cpog~\cite{bryant2023certified}, \clip~\cite{ChedeCS24}, \kcps~\cite{Capelli19} for \#QBF.

\bibliography{bib2doi}

\end{document}